\documentclass[11pt]{article}
\usepackage{bm, commath}
\usepackage[authoryear]{natbib}
\usepackage{caption}
\usepackage{subcaption}
\usepackage{graphicx}
\usepackage{amsmath, amsfonts, amsthm}
\usepackage{float}
\usepackage{booktabs,siunitx}
\usepackage{url}
\usepackage{multirow}
\usepackage{amssymb}
\usepackage{blkarray}
\usepackage{bbm}
\usepackage{multicol}
\usepackage{authblk}
\usepackage[flushleft]{threeparttable}
\usepackage{enumitem}
\usepackage[nottoc]{tocbibind}
\usepackage{comment} % commenting 
\usepackage{mathrsfs}
\usepackage{amsmath}
\usepackage{algorithm}
\usepackage{algpseudocode}
\usepackage{mathtools} 
\usepackage[margin= 0.9in]{geometry} % 1 inch margin
\usepackage{setspace} 

\usepackage{comment}
\usepackage{prodint}
\usepackage{xr}

\usepackage{listings}
\usepackage{bbm}
\usepackage{textcomp}
\usepackage{sectsty}
\usepackage[utf8]{inputenc}
\usepackage{float}
\usepackage{tikz}
\usetikzlibrary{shapes,decorations,arrows,calc,arrows.meta,fit,positioning}

\usepackage[colorlinks=true,
            linkcolor=blue,
            citecolor=blue,
            urlcolor=blue]{hyperref}

\makeatletter
\newcommand{\setcustomref}[2]{\def\@currentlabel{#2}\label{#1}}
\makeatother

\newtheorem{theorem}{Theorem}

\newtheorem{remark}{Remark}

\newtheorem*{assumption*}{Assumption}

\newtheorem{proposition}{Proposition}

\newtheorem{lemma}{Lemma}

\newtheorem{innercustomgeneric}{\customgenericname}
\providecommand{\customgenericname}{}
\newcommand{\newcustomtheorem}[2]{%
  \newenvironment{#1}[1]
  {%
   \renewcommand\customgenericname{#2}%
   \renewcommand\theinnercustomgeneric{##1}%
   \innercustomgeneric
  }
  {\endinnercustomgeneric}
}

\newcustomtheorem{customthm}{Theorem}
\newcustomtheorem{customassumption}{Assumption}

\usepackage{mathtools}

\usepackage{xcolor}

\newcommand{\E}{\mathbb{E}}

\newcommand{\Pp}{\mathbb{P}}

\begin{document}

\def\spacingset#1{\renewcommand{\baselinestretch}%
{#1}\small\normalsize} \spacingset{1.5}

\linespread{1.5}

\doublespacing

\sectionfont{\bfseries\large\sffamily}%
\subsectionfont{\bfseries\sffamily\normalsize}%

\title{A general nonparametric framework for testing hypotheses about function-valued parameters}
\author[1]{Albert Osom \thanks{Corresponding author: {\tt aosom@uw.edu}}}
\author[1,2]{Ali Shojaie}
\author[1,3]{Aaron Hudson }
\affil[1]{Department of Biostatistics, University of Washington}
\affil[2]{Department of Statistics, University of Washington}
\affil[3]{Vaccine and Infectious Disease Division, Fred Hutchinson Cancer Center}
\date{}

\maketitle

\begin{abstract}
\noindent We present a general nonparametric approach for testing whether a statistical parameter defined through conditional distributions is constant across the conditioning variables. Such hypotheses arise naturally in problems such as assessing treatment effect heterogeneity, conditional associational effects, and conditional mean dependence. Our framework studies function-valued parameters obtained by evaluating a smooth statistical functional on conditional probability distributions. We establish an explicit connection between our test and procedures based on studying the norm of the function-valued parameter. Unlike many existing norm-based tests, which  exhibit poor asymptotic behavior under the null, the proposed test statistic admits a tractable limiting null distribution. We illustrate the applicability of the proposed test through several examples, assess its operating characteristics in simulation studies, and apply it to data from a breast cancer trial to identify predictive biomarkers for response to adjuvant chemotherapy.
\end{abstract}

\section{Introduction}\label{intro}
Testing hypotheses about function-valued parameters, such as conditional mean and conditional average treatment effect (CATE) functions, arises commonly in many scientific applications, and it presents a fundamental challenge, particularly in nonparametric models. For instance, clinical investigators often seek to determine whether the effect of an intervention on a health outcome differs across subgroups defined by patient characteristics, or if certain patient characteristics predict a health outcome.  Addressing these questions may require performing inference on CATE or a conditional mean function. Both of these parameters, which we will study later, fall within a broader class of function-valued parameters that can be expressed as an evaluation of a smooth statistical functional on a conditional probability distribution.
The current paper focuses on this class of function-valued parameters, which encompasses many other parameters, including the conditional covariance function, the conditional survival function, and continuous linear functionals of the conditional mean function \citep{chernozhukov2024conditional}. We develop a general framework for testing whether the evaluation of a given smooth statistical functional on a conditional probability distribution is constant in the conditioning variables. For example, this null can correspond with equality of marginal and conditional means in a mean regression problem.

%%%%%%%%%
\subsection{Related literature}
%%%%%%%%%
Inference on function-valued parameters is a long-standing research topic, but methodological work remains limited, especially when aiming to remain agnostic to the data-generating mechanism. In contrast, if one restricts the space of distributions that generate the data (e.g., parametric models) or the structure of the function-valued parameters (e.g., linearity), inference becomes much simpler and has been well studied.  For instance, to test whether a variable is a significant predictor of the outcome in a linear regression model, it suffices to test whether its coefficient in the model is zero. Although this approach is straightforward, it may lead to misleading conclusions if the model assumptions are incorrect or misspecified \citep{whitney2020comment}. It is thus desirable to relax the restrictions on the model space and allow the function-valued parameter to fall within a nonparametric or semi-parametric model. Inference under this relaxed model space is challenging, as it often requires flexibly estimating the function-valued parameter using machine learning or data-adaptive approaches, which typically do not converge at the standard parametric $n^{1/2}$-rate. General inferential strategies have nonetheless been proposed for inference on certain classes of function-valued parameters under nonparametric models.  %\citep{luedtke2019omnibus,hudson2023nonparametric,westling2022nonparametric}.

Existing hypothesis tests for function-valued parameters, including \cite{luedtke2019omnibus}, \cite{hudson2023nonparametric}, and \cite{westling2022nonparametric},  rely on test statistics that are based on a one-dimensional summary of the function-valued parameter. In particular, \cite{westling2022nonparametric} introduce a test statistic based on an estimator of the primitive (i.e., integrated) form of the causal dose-response function to assess whether it is constant. Their approach can be used to test hypotheses about other function-valued parameters \citep[see , e.g.,][]{wang2024nested}, but the test may suffer from low power in finite samples when the function-valued parameter is complex or non-monotone. 
\cite{hudson2023approach} and \cite{jin2024class} develop generalized tests for constancy of the dose-response function that can adapt to non-monotonicity.
\cite{hudson2023nonparametric} focus on function-valued parameters that can be expressed as the minimizer of a goodness-of-fit measure (e.g., expected squared error). They construct their test statistic based on the maximum difference between the estimator of the minimum goodness-of-fit measure and the goodness-of-fit measure evaluated at the null parameter along a path that passes through the function-valued parameter. However, if the function-valued parameter of interest is not a minimizer of some goodness-of-fit measure, their approach is inapplicable. \cite{luedtke2019omnibus} propose a test statistic based on the mean discrepancy \citep{gretton2006kernel} between the distributions of the function-valued parameter and the null parameter, maximized within the unit reproducing kernel Hilbert space (RKHS). Although this framework is more broadly applicable, it requires second-order pathwise differentiability \citep{robins2008higher, robins2009quadratic, van2014higher} to avoid degeneracy issues in characterizing the limiting null distribution of the test statistic. Even when second-order pathwise differentiability holds, obtaining second-order influence functions remains challenging \citep{fisher2021visually}.

\subsection{Our contributions}
We propose a nonparametric test for a class of null hypotheses concerning the constancy of a function-valued parameter, defined as the evaluation of a smooth functional at a conditional distribution. Our approach develops a norm-based test by exploiting the representation of the norm as a supremum over finite-dimensional parameters indexed by functions in a prescribed class. These finite-dimensional parameters are estimated using debiased machine-learning techniques \citep{chernozhukov2018double} that attain $n^{1/2}$-rate of convergence without requiring sample splitting, and their supremum serves as the test statistic.
We show that our test statistic has a well-behaved limiting null distribution and admits an asymptotic representation based on influence functions. Moreover, using the the supremum representation of the norm allows us to choose function classes that leverage structural assumptions on the function-valued parameter for more efficient estimation of the norm. Our test is guaranteed to achieve asymptotic type-I error control, and its power against a given alternative depends on the choice of the function class. To ensure power against a wide range of alternatives, we propose an aggregate test that combines several function classes. We also describe a broad class of function-valued parameters for which our proposed test statistic can be constructed without explicitly deriving influence functions from scratch. This feature makes the framework easy to specialize to additional examples without requiring as much specialized knowledge about semi-/non-parametric efficiency theory.

Our approach has the following advantages over existing methods. Firstly,  our test can be more powerful than tests constructed based on direct estimation of the norm of the function-valued parameter. Such tests are consistent since the norm is zero if and only if the null is true. However, direct estimation of the norm does not admit a tractable limiting distribution under the null \citep{williamson2021nonparametric}, making it difficult to obtain critical values that guarantee type-I error control. Though one can obtain well-behaved norm-estimators using sample-splitting \citep{williamson2023general,dai2022significance,cai2024test}, this  approach results in low power due to inefficient use of the data. Our test avoids these issues and establishes a theoretical size guarantee, without the need for sample-splitting. Secondly, our framework allows seemless construction of test statistics that can leverage known or assumed structure on the function-valued parameter---e.g., monotonicity, sparsity, and smoothness---to improve power, using a similar strategy as described in \cite{jin2024class}. Thirdly, we introduce an aggregate test that combines different individual test statistics, while preserving tight size control. Lastly, our proposed test is easy to implement and readily generalizes to new testing problems. In particular, unlike the method of \cite{luedtke2019omnibus}, our method requires only first-order pathwise derivatives, and we provide theoretical results that greatly simplify these derivations, eliminating the need for manual calculations in many settings. Our test also applies to classes of function-valued parameters that are not covered by the method of \cite{hudson2023approach} and are not addressed in \cite{luedtke2019omnibus}.

The remainder of the paper is organized as follows. Section~\ref{examples} presents the general testing problem and introduces two main illustrative examples: testing conditional mean dependence and testing treatment effect heterogeneity. Section~\ref{methodology} describes our testing strategy and the proposed test statistic. We apply our framework to the illustrative examples in Section~\ref{illustration}. In Section~\ref{aggregatetest}, we examine some choices of function classes over which the supremum can be evaluated and describe an aggregate test for combining multiple test statistics. Section~\ref{simulation} evaluates the performance of our method through simulation studies, and Section~\ref{application} presents a data example. Section~\ref{conclusion} provides concluding remarks and directions for future research.

\section{Overview of the testing problem and examples \label{examples}}

Let $O=(U,V)$ be a random vector drawn from a distribution $P_0$ with support $\mathcal{O}=\mathcal{U} \times \mathcal{V}$, where $\mathcal{U}$ and $\mathcal{V}$ are the supports of $U$ and $V$, respectively.
Suppose we observe  $\{ O_i=(U_i,V_i)\}^{n}_{i=1}$ independent and identically distributed (i.i.d.) copies of $O=(U,V)$ from the distribution $P_0$; let $\mathbb{P}_n$ denote the corresponding empirical distribution. We assume $P_0\in \mathcal{M}$, where $\mathcal{M}$ is the collection of all distributions on $\mathcal{O}$. Let $P_{0,U}$, $P_{0,V}$, $P_{0,U|V}$, and $P_{0,O|V}$ denote the respective  marginal and conditional distributions derived from $P_{0}$. We use $o= (u,v)$ to denote a realization of the random vector $O = (U,V)$. For $V=v$ fixed, we denote $P_{0,O|v}$ as the conditional distribution of $O$ given $V=v$. For any $P \in \mathcal{M}$, let $\Psi(P)$ denote a smooth functional; that is, we assume $\Psi$ is pathwise differentiable at $P_0$ \citep{bickel1993efficient}. Our goal is to make inference on the corresponding function-valued parameter obtained by evaluating $\Psi$ at the conditional distribution of $O$ given $V=v$. For notational simplicity, we write $\Psi_0 = \Psi(P_0)$ and $\Psi_{0,v} = \Psi(P_{0,O|v})$. For example, we can consider $\Psi_0 = \int u \, dP_{0}(o)$ and $\Psi_{0,v} = \int u \, dP_{0,O|v}(o)$. We define the class of $P_{0}\; p$-integrable functions with induced norm $\|f\|_{L^{p}(P_{0})}$  as $L^{p}(P_{0})\coloneq \{ f : \mathcal{O} \rightarrow \mathbb{R} : \|f\|_{L^{p}(P_{0})} < \infty  \}$, where
$\|f\|_{L^{p}(P_{0})}\coloneq \Big [ \int |f|^p dP_{0}\Big]^{1/p}$ for  $p \in [1,\infty)$ and  $ \|f\|_ {L^{\infty}(P_{0})}\coloneq \underset{v}{\text{ess sup}}|f(v)|$.  Denoting $\theta_{0}=\int \Psi_{0,v}(v)dP_{0,V}(v)$, our goal is to test hypotheses of the form:
\begin{align}
    H_0:\Psi_{0,v}=\theta_{0} \quad  \quad \text{versus} \quad  \quad H_1: \Psi_{0,v} \ne \theta_{0}, \quad P_{0,V}\text{-almost surely}. \label{nullhypothesis} 
\end{align}
A special case of \eqref{nullhypothesis} is when $\theta_{0} \equiv 0$. Before presenting our framework for testing  \eqref{nullhypothesis}, we present two motivating examples that fall within the class of inference problems of form \eqref{nullhypothesis}: testing for conditional mean dependence and assessing the presence of treatment effect heterogeneity. In the Supplementary Material \ref{sec:additional-examples}, we also include hypotheses tests about the conditional covariance function and conditional survival function as additional examples.

\subsection*{Example 1 \normalfont{(Testing for conditional mean dependence)}} 
\setcustomref{examp1}{1}  % Manually set reference to 3
A fundamental task in predictive modeling is to determine whether a set of covariates can significantly improve the prediction of an  outcome of interest in a mean-squared error sense. This is typically assessed by studying the relationship between the outcome and covariates using the conditional mean function and determining whether the conditional mean function depends on the given set of covariates. When the conditional mean function is correctly specified as a linear function of the covariates, standard techniques such as the $F$-test provide valid and powerful tests for the significance of covariates. However, when the linear model is misspecified, type-I error rate may not be controlled and power can be substantially reduced. To overcome challenges associated with model misspecification, \cite{zhang2018conditional} and \cite{lai2021kernel} proposed model-free tests for evaluating the null hypothesis of conditional mean independence. These methods improve power by formulating conditional mean independence as a global norm-based hypothesis on a functional object, enabling detection of broad classes of nonlinear alternatives. Despite these advances, overall conditional mean independence testing remains an active area of research, particularly in high-dimensional and nonparametric settings \citep{hudson2026inference,li2023testing,williamson2023general,cai2024test}. We show that hypothesis testing for conditional mean dependence fits within the general framework described in~\eqref{nullhypothesis}.

Let $O=(X,Y) \sim P_{0}$, where $Y\in \mathbb{R}$ is the outcome, or response, of interest, assumed to be bounded, and $X\in \mathbb{R}^{p}$ is a set of covariates. Defining our smooth map as $\Psi(P)= \int y dP $ for $P \in \mathcal{M}$, the function-valued and null parameters in \eqref{nullhypothesis} are $\Psi_{0,x}=\E[Y|X=x]$ and $\theta_0 = \E[Y]$, respectively.
Then, the null hypothesis of conditional mean independence, can be evaluated by testing whetehr the conditional mean of $Y$ given $X$ depend on $X$. Specifically, %the null hypothesis of interest is:
\begin{align*}
    H_0: \E[Y|X=x]=\E[Y] \quad P_{0,X}\text{-almost surely}.
\end{align*}
The above hypothesis clearly falls within our general class of null hypotheses in \eqref{nullhypothesis}. 

\subsection*{Example 2 \normalfont{(Assessing treatment effect heterogeneity)}}
\setcustomref{examp2}{2}  % Manually set reference to 2

Evaluating whether responses to an intervention are heterogeneous across a population has been of interest for many years \citep[see][for early discussions]{byar1985assessing}. When heterogeneity arises among subgroups defined by discrete covariates (e.g., sex, or age groups), several parametric \citep{byar1985assessing, gail1985testing} and nonparametric \citep{imai2024statistical, dai2023nonparametric, chernozhukov2017fisher, crump2008nonparametric} tests for treatment effect heterogeneity have been proposed, mainly for randomized trials, with some extensions to observational studies. These methods assess effect heterogeneity by testing the constancy of the conditional average treatment effect, where the conditioning covariate corresponds to the subgroup-defining variable. However, in many settings, investigators do not have clearly defined subgroups,  but instead observe continuous or multivariate covariates that may capture the heterogeneity. Ad-hoc strategies, such as discretizing continuous covariates and applying the existing methods often yield low-powered tests, especially in small samples.

To test the null hypothesis of treatment effect homogeneity, we also propose evaluating the constancy of the CATE. Specifically, we test the null hypothesis that the CATE does not depend on the conditioning variable. Our proposed test does not require pre-defined subgroups and works even when the conditioning covariate is continuous or multivariate.

Let $O=(Y^{(1)},Y^{(0)},T,X) \sim P_0$ be the data we wish to have observed, where $Y^{(t)}$ denotes a binary or continuous  potential outcome under treatment $T=t$ and $X$ denotes the baseline covariates. Let $T\in \{0,1\}$  and for $s \subset \{1,\ldots,p\}$, let $X_s$ be the sub-vector containing elements of $X$ with indices in $s$. For this example, our smooth map is $\Psi(P) =\int \left(Y^{(1)}-Y^{(0)}\right)dP(o)$ for $P \in \mathcal{M}$. Then, the function-valued and null parameters in \eqref{nullhypothesis} are $\Psi_{0,x_{s}}=\E[Y^{(1)}-Y^{(0)}|X_{s}=x_{s}]$ and  $\theta_{0}= \E[Y^{(1)}-Y^{(0)}]$, respectively. In this case, the null hypothesis of interest,
\begin{align*}
    H_0:\E[Y^{(1)}-Y^{(0)}|X_{s}=x_{s}] =\E[Y^{(1)}-Y^{(0)}]  \quad P_{0,X_{s}}\text{-almost surely},
\end{align*}
also falls within the general framework of \eqref{nullhypothesis}. In the literature,  the parameters $\theta_{0}$ and $\Psi_{0,x_{s}}$ are referred to as the ATE and CATE, respectively. 

The two illustrative examples (Examples~\ref{examp1}--\ref{examp2}) represent many other statistical testing problems that can be motivated under the general form in \eqref{nullhypothesis}. We will present detailed discussion of the application of the proposed test to these examples in Section~\ref{illustration} and evaluate finite sample performance through simulations in Section~\ref{simulation}. 

\section{Proposed methodology\label{methodology}}

\subsection{Testing strategy \label{strategy}}
We next develop a test for hypotheses of the form  \eqref{nullhypothesis}. Suppose $ \Psi_{0,v} \in L^{p}(P_{0}) $ and $\theta_{0}$ is finite. Then, the null hypothesis in \eqref{nullhypothesis} is equivalent to  $\| \Psi_{0,v}-\theta_{0}\|_{L^{p}(P_{0})}=0$. When expressing the null hypothesis as a comparison between the norm and zero, one seemingly straightforward option is to perform a Wald-type test based on an $n^{1/2}$-consistent asymptotically normal estimator \citep{cai2024test,dai2022significance, williamson2021nonparametric}. However, such estimators do not admit to a tractable limiting null distribution due to the boundary null problem: under the null, the first-order influence function often vanishes, leading to a degenerate limiting null distribution of the test statistic \citep{verdinelli2024decorrelated}.
To address this challenge, we propose a test statistic motivated by an alternative representation of norms as supremum (via H\"older's inequality) over a suitable function class \citep[][Proposition~6.13]{folland1999real}. Specifically, the null hypothesis in \eqref{nullhypothesis} can be equivalently formulated as 
\begin{align}
   H^{\prime}_{0}: 
   \underset{\substack{h \in L^{q}(P_{0}) \\ \|h \|_{L^{q}(P_0)}=1}}{\sup} 
   \left | \int \left\{\Psi_{0,v}(v)-\theta_{0}\right\} h(v) \, dP_{0,V}(v) \right |=0, \label{newhypothesis}
\end{align}
where $q$ is the conjugate exponent of $p$ $(i.e., \frac{1}{p} +\frac{1}{q}=1)$ and $L^{q}(P_{0})$ is the dual of $ L^{p}(P_{0})$. We illustrate this formulation for $p \in \{1,2\}$. For $p=1$, the maximizer of supremum in \eqref{newhypothesis} is $h^{*}=\text{sign}\{\Psi_{0,v}-\theta_{0} \}$, yielding $\int \{\Psi_{0,v}(v)-\theta_{0}\} \text{sign}\{\Psi_{0,v}(v)-\theta_{0} \} dP_{0,V}(v) =\int |\Psi_{0,v}(v)-\theta_{0}|dP_{0,V}(v)$. For $p=2$, $h^{*}=(\Psi_{0,v}-\Psi_{0})/\|\Psi_{0,v}-\Psi_{0}\|_{L^2(P_0)}$, and the supremum becomes $\int \{\Psi_{0,v}(v)-\theta_{0}\} (\Psi_{0,v}(v)-\theta_{0})/\|\Psi_{0,v}-\theta_{0}\|_{L^2(P_0)} dP_{0,V}(v) = \left[\int (\Psi_{0,v}(v)-\theta_{0})^{2}dP_{0,V}(v)\right ]^{1/2}.$

Motivated by this dual representation, one could attempt to construct a test by estimating all linear transformations of $\Psi_{0,v} - \theta_0$ and explicitly evaluating the supremum over  $L^q(P_{0})$. However, this is technically challenging and practically infeasible, as $L^q(P_{0})$ is too large. Instead, we restrict attention to a rich but non-complex function class $\mathcal{H}$ (e.g., one with bounded entropy) that captures meaningful deviations from the null. Specifically, defining  
\begin{equation}\label{eqn:hdefs}
\bar{h}_{0}\coloneq  \int h(v)dP_{0,V}(v), \quad h^{c}_{P_{0,V}}\coloneq h(V) -\bar{h}_{0}, \quad \text{and} \quad \Omega_{P_{0}}(h)\coloneq \int \Psi_{0,v}(v)h^{c}_{P_{0,V}}(v)dP_{0,V}(v), 
\end{equation}
for $h \in \mathcal{H} \subset L^q(P_{0})$, we consider the following null hypothesis
 \begin{align}
     H_0^{\prime \prime}:  \underset{h \in \mathcal{H}}{\text{sup}} \left |\Omega_{P_{0}}(h) \right |=0. \label{hypo:3}
 \end{align}
From \eqref{hypo:3}, we construct our inference procedure by choosing $T_{n}(\mathcal{H})=\underset{h \in \mathcal{H}}{\text{sup}} \left |n^{1/2}\widehat{\Omega}_{n}(h) \right |$ as the test statistic, where $\widehat{\Omega}_{n}(h)$ is an EIF-based estimator of $\Omega_{P_{0}}(h)$. This is attractive because, as shown later in Theorem~\ref{theorem2}, $T_{n}(\mathcal{H})$ converges weakly to a continuous transformation of a Gaussian process under some regularity conditions and hence has a tractable limiting null distribution. 
Moreover, if $H_0$ is true, then $H_{0}^{\prime\prime}$ is also true, although the converse does not necessarily hold: $H^{\prime\prime}_{0}$ being true does not necessarily imply that $H_0$ is true if  $\mathcal{H}$ is not dense in $L^q(P_{0})$. Nevertheless, the test based on \eqref{hypo:3} ensures valid type-I error control for $H_0$ for any appropriate choice of $\mathcal{H}$, and it has power against alternatives aligned with the chosen class $\mathcal{H}$. The remainder of this section  describes the construction of the test in detail.

\subsection{Theoretical foundation \label{theoretical}}
Recall that for $P_{0} \in \mathcal{M}$ and a smooth map $\Psi$, we define $\Psi_0 \coloneqq \Psi(P_0)$ and $\Psi_{0,v} \coloneqq \Psi(P_{0,O|v})$. 
We take a path through $P_0$ in the direction $s$ as $dP_t=(1+ts)dP_0$, where $s$ is a function satisfying $\int s(o)dP_{0}(o)=0$ and $\int s^2(o)dP_{0}(o)<\infty$. Similarly, for the conditional distribution of $O$ given $V = v$, under $P_{0}$, we take a path through $P_{0,O|v}$ in the direction $s_{v}$ as $dP_{t,O|v}=(1+ts_{v})dP_{0,O|v}$, where $s_v$ is uniformly bounded and satisfies $\int s_{v}(o)dP_{0,O|v}(o)=0$. We simplify the notation by writing $P_0s \coloneq \int s(o)dP_{0}(o)$ and $\mathbb{P}_n s \coloneq \frac{1}{n}\sum_{i=1}^n s(O_i)$. To construct an EIF-based estimator of 
$
\Omega_{P_{0}}(h) = \int{\Psi_{0,v}(v) h^c_{P_{0,V}}(v) dP_{0,V}(v)},
$
%in the nonparametric model, 
defined in \eqref{eqn:hdefs}, 
we require $\Psi_{0,v}$ and $\Psi_{0}$ to satisfy the following regularity conditions:

\noindent{\bf {C1:} }\label {C1} For some uniformly bounded function  $D_{P_{0}}: \mathcal{O}  \rightarrow \mathbb{R}$ such that  $\int D_{P_{0}}(o)dP_{0}(o)=0$, 
    \begin{align*}   
   &\frac{d\Psi_{t}}{dt} \Bigg|_{t=0}= \int D_{P_{0}}(o)s(o)dP_{0}(o).
    \end{align*}
\noindent{\bf {C2:} } \label {C2}   For some uniformly bounded function  $D^{v}_{P_{0}} : \mathcal{O} \rightarrow \mathbb{R}$ such that  $\int D^{v}_{P_{0}}(o) dP_{0,O|v}(o)=0$, 
\begin{align*}
    & \frac{d \Psi_{t,v}}{dt}   \Bigg|_{t=0} =\int D^{v}_{P_{0}}(o)s_{v}(o)dP_{0,O|v}(o) \text{ holds $P_{0,V}$-almost surely}.
     \end{align*}
Condition \hyperref[C1]{\rm{C1}} ensures that  $\Psi(P)$ is pathwise differentiable at $P_{0}$ relative to $\mathcal{M}$ with EIF $D_{P_{0}}$ \citep{bickel1993efficient}. Condition \hyperref[C2]{\rm{C2}} is a conditional version of pathwise differentiability, in that, we take the G$\mathrm{\hat{a}}$teaux derivative of $\Psi_{t,v}$ with respect to the conditional distribution of $O$ given $V=v$ induced by $P_0$. We will refer to $D^{v}_{P_{0}}$ as the conditional influence function with respect to $P_{0,O|v}$.  Our next result characterizes the EIF of $\Omega_{P_{0}}(h)$. 
\begin{theorem}
\label{theorem1}
   Assume conditions \hyperref[C1]{\rm{C1}} and \hyperref[C2]{\rm{C2}} hold. Then, the functional $\Omega_{P_{0}}(h)$ is pathwise differentiable with efficient influence function
\begin{align*}
  D^*_{h,0}(o)= \Big(D^{v}_{P_{0}}(o) -\int\Psi_{0,v}(v)dP_{0,V}(v) +   \Psi_{0,v} (v)\Big)
h^{c}_{P_{0,V}}(v)- \Omega_{P_{0}}(h).
\end{align*} 
\end{theorem}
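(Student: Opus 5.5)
We prove pathwise differentiability directly along the paths $dP_t=(1+ts)\,dP_0$ from Section~\ref{theoretical}, with $s$ bounded and $P_0 s=0$. Once we show
\[
\frac{d}{dt}\Omega_{P_t}(h)\Big|_{t=0}=\int D^*_{h,0}(o)\,s(o)\,dP_0(o)
\]
with $D^*_{h,0}$ mean zero and square integrable, efficiency follows. In the nonparametric model $\mathcal{M}$ the tangent space is all of $L^2_0(P_0)$, so the gradient is unique and is the EIF.

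\textbf{Step 1: decompose the path.} Write $s(o)=s_V(v)+s_{O|v}(o)$ with
\[
s_V(v)=\E_{P_0}[s(O)\mid V=v],\qquad s_{O|v}=s-s_V .
\]
Then the marginal path is $dP_{t,V}=(1+ts_V)\,dP_{0,V}$. The conditional path is
\[
dP_{t,O|v}=\frac{1+ts}{1+ts_V}\,dP_{0,O|v},
\]
whose score at $t=0$ is $s_{O|v}$. This conditional score is uniformly bounded and has $P_{0,O|v}$-mean zero, so condition \hyperref[C2]{\rm{C2}} applies to it. The path is not exactly linear in $t$, but C2 is a Gâteaux derivative statement, so I will argue that only the first-order score matters. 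This may need a short remark: C2 is applied to a path that is linear up to $O(t^2)$.

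\textbf{Step 2: product rule.} Write
\[
\Omega_{P_t}(h)=\int \Psi_{t,v}\,h(v)\,dP_{t,V}-\Big(\int\Psi_{t,v}\,dP_{t,V}\Big)\Big(\int h\,dP_{t,V}\Big).
\]
Differentiate each factor at $t=0$.

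\begin{itemize}
\item Derivatives of $\Psi_{t,v}$ come from C2: $\dot\Psi_{v}=\int D^v_{P_0}\,s_{O|v}\,dP_{0,O|v}=\int D^v_{P_0}\,s\,dP_{0,O|v}$, using that $D^v_{P_0}$ has conditional mean zero.
\item Derivatives of $P_{t,V}$ contribute $\int(\cdot)\,s_V\,dP_{0,V}=\int(\cdot)\,s\,dP_0$ for functions of $v$.
\end{itemize}

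Collecting the terms gives
\[
\int\Big[D^v_{P_0}\,h(v)+\Psi_{0,v}h(v)-\theta_0\big(D^v_{P_0}+\Psi_{0,v}\big)-\theta_0 h(v)\Big]s\,dP_0 .
\]
Here the term $-\theta_0 h(v)$ comes from differentiating $\bar h$, and $\bar h_0$ multiplies $\int (D^v_{P_0}+\Psi_{0,v})\,s\,dP_0$.

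Rewriting with $h^c_{P_{0,V}}$ gives
\[
\big(D^v_{P_0}+\Psi_{0,v}-\theta_0\big)h^c_{P_{0,V}}+\text{const}.
\]
To check this, note that $-\bar h_0(D^v+\Psi_{0,v})-\theta_0 h$ matches the centred form up to a constant $\theta_0\bar h_0$. Constants integrate to zero against $s$, so we may subtract whatever constant centres the function. Its mean is
\[
\E\big[\Psi_{0,V}h^c\big]+\E\big[\E[D^v\mid V]\,h^c\big]-\theta_0\,\E[h^c]=\Omega_{P_0}(h),
\]
which yields $D^*_{h,0}$ as stated. Boundedness of $D^v_{P_0}$ and $h\in L^q$, with enough integrability, give $D^*_{h,0}\in L^2(P_0)$.

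\textbf{Main obstacle.} The delicate step is justifying the interchange of $d/dt$ with $\int\cdot\,dP_{t,V}$ when differentiating $\int\Psi_{t,v}h\,dP_{t,V}$. This needs C2 to hold uniformly in $v$, or a dominated-convergence bound on the difference quotient $(\Psi_{t,v}-\Psi_{0,v})/t$. I would first try using uniform boundedness of $s$ and of $D^v_{P_0}$, plus a mean-value argument, to dominate the difference quotient by a $P_{0,V}$-integrable envelope times $|h|$. If needed, I would add this as an implicit regularity assumption. Condition \hyperref[C1]{\rm{C1}} is used only to recognize that $\theta_0=\Psi_0$-type marginal terms are differentiable. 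The core computation relies on C2.
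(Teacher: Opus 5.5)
Your proposal is correct and follows essentially the same route as the paper: differentiate along $dP_t=(1+ts)\,dP_0$ by the product rule, feed the induced conditional score $s-\E_0[s\mid V]$ into C2, drop the $\E_0[s\mid V]$ part using $\int D^v_{P_0}\,dP_{0,O|v}=0$, and center by subtracting $\Omega_{P_0}(h)$. Writing $h^c$ as $h-\bar h$ instead of differentiating $h^c_{P_{t,V}}$ directly is only cosmetic, and the issues you flag (the induced conditional path is linear only to first order, and the derivative--integral interchange needs justification) are points the paper passes over silently; one small slip is that in your collected display the coefficient of $(D^v_{P_0}+\Psi_{0,v})$ should be $-\bar h_0$, not $-\theta_0$, as your next sentence correctly states.
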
 

Theorem~\ref{theorem1} implies that, under some regularity conditions, we can construct an estimator, $\widehat{\Omega}_n(h)$, of $\Omega_{P_{0}}(h)$ based on the EIF, $D^*_{h,0}$, that converges to a Gaussian distribution for a fixed $h$. Assuming further that $\mathcal{H}$ is $P_0$-Donsker, we can extend this convergence to be uniform over $\mathcal{H}$. That is, $\left\{n^{1/2}(\widehat{\Omega}_n(h)-\Omega_{P_{0}}(h)) : h \in \mathcal{H}\right\}$ will converge to a mean-zero Gaussian process $\mathbb{G}$. Therefore, defining our test statistic as some continuous function of  $n^{1/2}(\widehat{\Omega}_n(h))$, say $T_{n}(\mathcal{H})=\underset{h \in \mathcal{H}}{\text{sup}} \Big |n^{1/2}(\widehat{\Omega}_n(h)) \Big |$, guarantees a well-defined limiting null distribution for our test, as well as theoretical size guarantees for testing $H_{0}$.

\subsection{Calculating the conditional influence function}\label{sec:CEIF}
Obtaining $D^*_{h,0}$ from Theorem~\ref{theorem1} requires deriving the 
conditional influence function, $D^{v}_{P_{0}}$, associated with $\Psi_{0,v}$. Therefore, the derivation of $D^{*}_{h,0}$ is problem-dependent, often requiring knowledge of functional derivatives. In what follows, we show that  $D^{v}_{P_{0}}$ can be readily obtained for certain classes of function-valued  parameters. In many cases, $D^{v}_{P_{0}}$ can be obtained directly from the EIF of a known finite-dimensional parameter available in the semiparametric literature.

The first class of function-valued parameters we discuss consists of those for which the evaluation of the smooth map $\Psi(P)$ can be expressed as a continuous linear functional of the conditional mean function. To discuss this class, we consider a more general data structure $O=(U,V)$, where $U=(U_{1},U_{2})$. Here, $V$ denotes the variable indexing the function-valued parameter (e.g., a subgrouping variable), $U_1$ denotes a response variable of interest (e.g., a clinical outcome), and $U_2$ denotes additional covariates in the conditional mean function (e.g., a binary treatment indicator).

Denote $\mu_{0}(u_{2},v)=\E[U_{1} \mid U_{2}=u_{2},V=v]$ as the conditional mean function and $m(o,\mu_{0}(u_{2},v))$ be a continuous linear functional of $\mu_{0}(u_{2},v)$. Define the evaluation map as $\Psi(P)= \int m(o,\mu_{0}(u_{2},v)) \, dP(o)$, with $\Psi_{0}=\E[m(O,\mu_{0}(U_{2},V))]$ and $\Psi_{0,v}=\E[m(O,\mu_{0}(U_{2},V)) \mid V=v]$.
Following \cite{chernozhukov2018doubleb}, if $\E[m(O,\mu_{0}(U_{2},V))]=\E[\alpha_{0}(U_{2},V)\mu_{0}(U_{2},V)]$ for some function $\alpha_{0}$ satisfying $\E[\alpha_{0}(U_{2},V)^{2}]<\infty$, then the EIF of $\Psi_{0}$ takes the form
\begin{align}\label{eqn:prop1eif}
    D_{P_{0}}(o) = m(o,\mu_{0}(u_{2},v)) - \Psi_{0} + \alpha_{0}(u_{2},v)\{u_{1} - \mu_{0}(u_{2},v)\}.
\end{align}
Building on \eqref{eqn:prop1eif}, which holds for a broad class of well-studied statistical functionals, we show that when $\Psi_{0,v}$ is a continuous linear functional of the conditional mean, $D^{*}_{h,0}$ admits a closed-form expression.
\begin{proposition}\label{prop}
Let $\alpha_{0}: \mathcal{O} \rightarrow \mathbb{R}$ such that $\E[\alpha_{0}(U_{2},V)^{2}]<\infty$ and suppose
\begin{align*}
    \Psi_{0,v}
    = \E[m(O,\mu_{0}(U_{2},V))\mid V=v]
    = \E[\alpha_{0}(U_{2},V)\mu_{0}(U_{2},V) \mid V=v],
    %\quad \text{for all } \mu_{0} \text{ with } \E[\mu_{0}(U_{2},V)^{2}] < \infty.
\end{align*}
for all $\mu_{0}$ satisfying $\E[\mu_{0}(U_{2},V)^{2}] < \infty$.
Then, the EIF of $\Omega_{P_{0}}(h)$, $D^{*}_{h,0}(o)$, is given by
\begin{align*}
    \Big(
        m(o,\mu_{0}(u_{2},v))
        + \alpha_{0}(u_{2},v)\Big[u_{1} - \mu_{0}(u_{2},v)\Big]
        - \int \Psi_{0,v}(v)\, dP_{0,V}(v)
    \Big)
    h^{c}_{P_{0,V}}(v)
    - \Omega_{P_{0}}(h).
\end{align*}
\end{proposition}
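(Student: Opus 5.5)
The plan is to apply Theorem~\ref{theorem1}, so the only task is to identify the conditional influence function $D^{v}_{P_{0}}$ of $\Psi_{0,v}$ under the stated Riesz-type representation. Once $D^{v}_{P_{0}}$ is known, substituting it into the formula for $D^{*}_{h,0}$ in Theorem~\ref{theorem1} gives the result. Concretely, we want to show that for $P_{0,V}$-almost every $v$,
\begin{align*}
D^{v}_{P_{0}}(o) = m(o,\mu_{0}(u_{2},v)) - \Psi_{0,v}(v) + \alpha_{0}(u_{2},v)\{u_{1}-\mu_{0}(u_{2},v)\}.
\end{align*}
Then $D^{v}_{P_{0}} + \Psi_{0,v}(v) - \int \Psi_{0,v}\,dP_{0,V}$ collapses to $m + \alpha_{0}(u_1-\mu_0) - \int\Psi_{0,v}\,dP_{0,V}$. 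Multiplying by $h^{c}_{P_{0,V}}(v)$ and subtracting $\Omega_{P_{0}}(h)$ gives exactly the displayed expression.

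To obtain $D^{v}_{P_{0}}$, I will reduce to the unconditional result \eqref{eqn:prop1eif}. Fix $v$ and regard $P_{0,O|v}$ as a distribution in its own right. Under it, $V\equiv v$, the conditional mean of $U_1$ given $U_2=u_2$ is $\mu_{0}(u_{2},v)$, and $\Psi(P_{0,O|v}) = \E_{P_{0,O|v}}[m(O,\mu)]$. By hypothesis this equals $\E_{P_{0,O|v}}[\alpha_{0}(U_2,v)\mu(U_2,v)]$ for all square-integrable $\mu$. Applying \eqref{eqn:prop1eif} with $P_{0,O|v}$ in place of $P_0$ then yields the claimed form of $D^{v}_{P_{0}}$, which has conditional mean zero. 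To make this check self-contained rather than an appeal to \citet{chernozhukov2018doubleb}, I will verify C2 directly along the path $dP_{t,O|v}=(1+ts_{v})dP_{0,O|v}$. Write
\begin{align*}
\Psi_{t,v} = \int m(o,\mu_{t})\,dP_{t,O|v}.
\end{align*}
Differentiating at $t=0$ gives two terms:
\begin{itemize}
\item the term from the measure, $\int m(o,\mu_0)s_v\,dP_{0,O|v} = \int \{m-\Psi_{0,v}\}s_v\,dP_{0,O|v}$;
\item the term from $\mu_t$, which by linearity and the Riesz representation equals $\int \alpha_0\,\dot\mu\,dP_{0,O|v}$.
\end{itemize}
The derivative of the conditional mean is $\dot\mu(u_2) = \E[(U_1-\mu_0)s_v \mid U_2=u_2,V=v]$. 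Iterating expectations converts the second term into $\int \alpha_0(u_2,v)\{u_1-\mu_0\}s_v\,dP_{0,O|v}$.

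The main obstacle is this second term, in two respects. First, the Riesz identity is stated under the fixed $P_0$ conditional law of $U_2$, so I must track that perturbing $P_{0,O|v}$ also perturbs the marginal of $U_2$ given $v$. That effect is absorbed into the first term, because $m$ is evaluated under the perturbed measure. Second, moving the derivative through the integral needs justification. I would use boundedness of $s_v$ (assumed in C2), boundedness of $U_1$, and $\E[\alpha_0^2]<\infty$, via dominated convergence. Uniform boundedness of $D^{v}_{P_0}$, also required by C2, would be taken as an implicit regularity assumption (bounded $\alpha_0$ and $m$). Finally, C1 holds by \eqref{eqn:prop1eif} after integrating over $v$, so Theorem~\ref{theorem1} applies.
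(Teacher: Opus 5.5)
Your proposal is correct and follows essentially the same route as the paper: you differentiate $\Psi_{t,v}$ along $dP_{t,O|v}=(1+ts_v)\,dP_{0,O|v}$ and split the derivative into the measure term and the $\mu_t$ term. You then use the Riesz representation and $\dot\mu=\E[(U_1-\mu_0)s_v\mid U_2,V=v]$ to identify $D^{v}_{P_0}=m-\Psi_{0,v}+\alpha_0(u_1-\mu_0)$, and substitute this into Theorem~\ref{theorem1}. Your extra remarks are regularity points the paper leaves implicit: that the perturbation of the law of $U_2$ given $v$ is absorbed into the measure term, and that the interchange of derivative and integral and the uniform boundedness of $D^{v}_{P_0}$ in C2 need justification.
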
 
\noindent We show in Section~\ref{illustration} that Examples~\ref{examp1} and~\ref{examp2} from Section~\ref{examples} fall within the class of continuous linear functional of the conditional mean function; hence, in these examples, $D^{*}_{h,0}$ can be obtained directly using Proposition~\ref{prop}. The conditional covariance functional, discussed in the Supplementary Material~\ref{sec:additional-examples}, does not belong to this class. In what follows, we present another broad class of functionals that includes the conditional covariance functional. Consider the evaluation map, $\Psi(P)=\int g(o;P)dP(o)$, where $g$ depends on $P$ through derived marginal or conditional distributions under $P$. Proposition~\ref{prop2} shows that for function-valued parameters of the form $\Psi_{0,v}(v) = \E[g(O;P_{0,O|v})|V=v]$, the explicit form of $D^{*}_{h,0}(o)$ can be derived directly from the EIF of the corresponding marginalized parameter, $\int \E[g(O;P_{0,O|v})|V=v]dP_{0,V}(v)$. The following result formalizes this statement.

\begin{proposition}\label{prop2}
Define the smooth map $P \rightarrow \Psi(P)=\int g(o;P)dP(o)$. Consider the function-valued parameter $\Psi_{0,v}=\E[g(O;P_{0,O|v})|V=v]$ and its corresponding marginalized finite-dimensional parameter $\widetilde{\Psi}_{0}=\int \Psi_{0,v}dP_{0,V}(v)$. Assume $\widetilde{\Psi}_{0}$ is pathwise differentiable at $P_{0}$ relative to the nonparametric model $\mathcal{M}$ with EIF $\widetilde{D}_{P_{0}}$. If there exists $D^{v}_{P_{0}}$ such that Condition~\hyperref[C2]{\rm C2} holds, then the EIF of $\Omega_{P_{0}}(h)$, denoted $D^{*}_{h,0}(o)$, simplifies to
\begin{equation*}\label{eqn:prop2_2}
D^{*}_{h,0}(o) =
\widetilde{D}_{P_{0}}(o) 
h^{c}_{P_{0,V}}(v)
- \Omega_{P_{0}}(h).
\end{equation*}
\end{proposition}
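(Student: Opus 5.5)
The plan is to reduce Proposition~\ref{prop2} to Theorem~\ref{theorem1}. Under Condition~\hyperref[C2]{\rm C2}, and with Condition~\hyperref[C1]{\rm C1} holding for the marginalized functional by assumption, Theorem~\ref{theorem1} already gives
\begin{align*}
D^{*}_{h,0}(o) = \Big(D^{v}_{P_{0}}(o) + \Psi_{0,v}(v) - \theta_{0}\Big) h^{c}_{P_{0,V}}(v) - \Omega_{P_{0}}(h).
\end{align*}
It therefore suffices to prove the identity
\begin{align*}
\widetilde{D}_{P_{0}}(o) = D^{v}_{P_{0}}(o) + \Psi_{0,v}(v) - \theta_{0} \quad \text{in } L^{2}(P_{0}).
\end{align*}
I will show that the right-hand side is an influence function of $\widetilde{\Psi}$ at $P_0$. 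Since the model $\mathcal{M}$ is nonparametric, its tangent space is all of $L^{2}_{0}(P_{0})$. The gradient is then unique and equals the EIF, which forces the identity.

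\textbf{Step 1: decompose the path.} Fix a bounded score $s$ with $P_{0}s=0$ and the path $dP_{t}=(1+ts)dP_{0}$. Write $\bar s(v)=\E[s(O)\mid V=v]$. The marginal of $V$ then moves as $dP_{t,V}=(1+t\bar s)dP_{0,V}$, and the conditional law moves as
\begin{align*}
dP_{t,O|v} = \frac{1+ts}{1+t\bar s(v)}\, dP_{0,O|v}.
\end{align*}
This conditional path has score $s_{v}=s-\bar s(v)$ at $t=0$, and $s_v$ has conditional mean zero.

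\textbf{Step 2: differentiate by the product rule.} Write $\widetilde{\Psi}(P_{t})=\int \Psi_{t,v}\,(1+t\bar s(v))\,dP_{0,V}(v)$. Differentiating at $t=0$ gives two terms.
\begin{itemize}[leftmargin=*]
\item The first term is $\int \frac{d}{dt}\Psi_{t,v}\big|_{t=0}\,dP_{0,V}(v)$. By Condition~\hyperref[C2]{\rm C2} this equals $\int\!\int D^{v}_{P_{0}}\,s_{v}\,dP_{0,O|v}\,dP_{0,V}$. Because $\E[D^{v}_{P_0}\mid V=v]=0$, the $\bar s$ part drops out, and the term reduces to $P_{0}(D^{v}_{P_{0}}s)$.
\item The second term is $\int \Psi_{0,v}\bar s\,dP_{0,V}$. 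This equals $P_{0}(\Psi_{0,v}s)$, which in turn equals $P_{0}\{(\Psi_{0,v}-\theta_{0})s\}$ because $P_{0}s=0$.
\end{itemize}
Summing the two terms shows that $D^{v}_{P_0}+\Psi_{0,v}-\theta_0$ is a gradient. It also has mean zero, since $\E[D^v_{P_0}]=0$ and $\E[\Psi_{0,v}(V)]=\theta_0$. Uniqueness of the EIF in $\mathcal{M}$ then gives the identity, and substituting it into Theorem~\ref{theorem1} yields the stated form of $D^{*}_{h,0}$.

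\textbf{Main obstacle.} The delicate step is applying Condition~\hyperref[C2]{\rm C2} in Step 2. That condition is stated only for linear conditional paths $(1+ts_{v})dP_{0,O|v}$. The induced conditional path $(1+ts)/(1+t\bar s)$ is linear only to first order, since $(1+ts)/(1+t\bar s)=1+ts_{v}+O(t^{2})$ uniformly when $s$ is bounded. I would first argue that a remainder of order $t^{2}$ in the density does not change the Gâteaux derivative. This holds provided the derivative in Condition~\hyperref[C2]{\rm C2} is in fact a Hadamard-type (or locally uniform) derivative in the density direction. I would add this mild uniformity as an implicit part of Condition~\hyperref[C2]{\rm C2}, and I would also use it to justify interchanging $d/dt$ with $\int\cdot\,dP_{0,V}$ by dominated convergence, using the boundedness of $D^{v}_{P_0}$ and $s$.
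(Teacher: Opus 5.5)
Your argument is the paper's proof. You differentiate $\widetilde{\Psi}(P_t)=\int \Psi_{t,v}\,dP_{t,V}$ along $dP_t=(1+ts)\,dP_0$ by the product rule, use C2 with $s_v=s-\E_0[s\mid V=v]$ and $\E_0[D^v_{P_0}\mid V]=0$ to obtain $P_0(D^v_{P_0}s)+P_0\{(\Psi_{0,v}-\theta_0)s\}$, identify $\widetilde{D}_{P_0}=D^v_{P_0}+\Psi_{0,v}-\theta_0$, and substitute into Theorem~\ref{theorem1}, exactly as the paper does. Your appeal to uniqueness of the gradient in $\mathcal{M}$, and your caveat that the induced conditional path $(1+ts)/(1+t\bar s)$ is linear only to first order, are points the paper passes over silently; one small correction is that C1 concerns $\Psi(P)=\int g(o;P)\,dP(o)$, not $\widetilde{\Psi}$, but Theorem~\ref{theorem1}'s derivation uses only C2, so nothing is lost.
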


\noindent Similar to Proposition~\ref{prop}, this result shows that the explicit form of $D^{*}_{h,0}$ can be derived directly from known EIFs, without requiring a separate derivation of the conditional influence function.  
The class of function-valued parameters covered by Proposition~\ref{prop2} is broad, encompassing the conditional mean, conditional variance, conditional expected density, and conditional covariance, among others. For instance, in the context of testing hypotheses about conditional covariance, the form of $\widetilde{D}_{P_{0}}$ is well established in the literature \citep{kennedy2024semiparametric, xiang2020flexible}.  
Together, Propositions~\ref{prop} and~\ref{prop2} provide a unified strategy for deriving $D^*_{h,0}$ across a wide range of problems by leveraging known EIFs of corresponding marginalized parameters.

\begin{remark} \label{remark1}
The conclusion from Proposition~\ref{prop} extends to a broader class of function-valued parameters beyond those that can be expressed as bounded linear functionals of the conditional mean. Specifically, this result can be extended to the class of functionals of the form $\Psi_{0}=\E[m(O,\eta_{0,v})]$ and $\Psi_{0,v}=\E[m(O,\eta_{0,v})|V=v]$, where $m(o,\eta_{0,v})$ is a smooth function of some summary $\eta_{0,v}$ of the conditional distribution $P_{0,O|v}$. If the conditional influence function for $\eta_{0,v}$ exist, then one can obtain the conditional influence function of $\Psi_{0,v}$ from the form of the efficient influence function of $\Psi_{0}$. If $\eta_{0,v}$ is the conditional mean, this result reduces to Proposition~\ref{prop}. However, $\eta_{0,v}$ can be any conditional quantity for which we can obtain its conditional influence function. An example is when $\eta_{0,v}$ is the conditional survival function. We present this general result in the Supplementary Material~\ref{sec:additional-examples} and derive the form of $D^{v}_{P_{0}}$ for when $\eta_{0,v}$ is the conditional survival function.
\end{remark}

\subsection{Test statistic and its limiting null distribution \label{estimation}}

We propose our test statistic as a function of an estimator of 
$\Omega_{P_{0}}(h)$. Importantly, we want the estimator to be flexible 
but also attain a parametric rate of convergence. To this end, we introduce the naive plug-in estimator, from which we build an estimator that achieves $n^{1/2}$-rate of convergence. Let $\widehat{P}_{n} \in \mathcal{M}$ be an estimate of $P_{0}$ based on the random sample $O_1, O_2, \ldots, O_n$, where $O_i = (U_i, V_i)$ for 
$i = 1, \ldots, n$. For $h \in \mathcal{H}$, we construct a plug-in estimator for the parameter $\Omega_{P_{0}}(h)$ in \
\eqref{eqn:hdefs} as 
    \begin{align*}
    \widehat{\Omega}_{\widehat{P}_{n}}(h) = \frac{1}{n}\sum_{i=1}^n  \Psi_{n,v}(V_{i}) \Big [ h(V_i)-\frac{1}{n}\sum_{j=1}^n h(V_j)\Big ],
    \end{align*}
where $\Psi_{n,v}$ is a nonparametric estimator of $\Psi_{0,v}$. Although the plug-in estimator is simple to obtain, it generally inherits bias from estimating nuisance functions with flexible nonparametric estimators. To correct the bias of the plug-in estimator, we propose to use the one-step bias correction method \citep{pfanzagl1985contributions} by adding a correction term based on the EIF of $\Omega_{P_{0}}(h)$. Under some regularity conditions, the de-biased estimator retains a normal asymptotic limit. We construct a one-step estimator, $\widehat{\Omega}^{os}_{n}(h)\coloneq \widehat{\Omega}_{\widehat{P}_{n}}(h) + n^{-1}\sum_{i=1}^n  D^*_{h,n}(O_i)$, at a fixed $h$ where 
\begin{align}
     D^*_{h,n}(O_i) & =   \frac{1}{n} \sum_{i=1}^n \Big[ D^{v}_{n}(O_i) -\sum_{k=1}^n \Psi_{n,v}(V_{k}) + \Psi_{n,v}(V_{i}) \Big]\Big (h(V_i)-\frac{1}{n}\sum_{j=1}^n h(V_j)\Big) -\widehat{\Omega}_{\widehat{P}_{n}}(h). \label{EIFn}
\end{align}
Here, $\Psi_{n,v}$ and $D^{v}_{n}$ could depend on other nuisance functions (e.g., the conditional mean function) that are specific to the function-valued parameter in the testing problem. 

In Theorem~\eqref{theorem2}, we state sufficient conditions under which $\{n^{1/2}\widehat{\Omega}^{os}_{n}(h): h \in \mathcal{H}\}$ has a tractable limiting distribution, thereby ensuring asymptotic validity of our proposed testing procedure.
In particular, we provide conditions under which $\widehat{\Omega}^{os}_{n}$ is uniformly asymptotically linear with influence function $D^*_{h,0}$ and has a Gaussian limit; that is, 
\begin{align}\label{eqn:omegahatos}
    \widehat{\Omega}^{os}_{n}(h)- \Omega_{0}(h)=\frac{1}{n}\sum_{i=1}^{n}D^*_{h,0}(O_{i})+r_n(h),
\end{align}
where $\underset{ h \in \mathcal{H}}{\text{sup }}|r_n(h)|=o_P(n^{-1/2})$, and the standardized process $\{n^{1/2}[\widehat{\Omega}^{os}_{n}(h) - \Omega_0(h)]: h \in \mathcal{H}\}$ is asymptotically Gaussian.

%\begin{samepage}
\begin{theorem} 
\label{theorem2}
Suppose $ \; i)  \; \mathcal{H}$ is a $P_{0}$-Donsker class, ii) \; $\{ D^*_{h,n}, D^*_{h,0}: h \in \mathcal{H}\}$ is contained in a $P_0$-Donsker class, and $(iii) \; \underset{ h \in \mathcal{H}}{\text{sup }}|r_n(h)|=o_P(n^{-1/2})$, where $r_n(h)$ is defined in \eqref{eqn:omegahatos}. Then, $\{n^{1/2}[\widehat{\Omega}^{os}_{n}(h)-\Omega_{0}(h)]: h\in \mathcal{H}\}$ converges weakly in the space $\ell^{\infty}(\mathcal{H})$ of bounded real-valued functions on $\mathcal{H}$ to a mean-zero Gaussian process $\mathbb{G}$ with covariance function $\Sigma_0(h_1,h_2)=P_0[D^*_{h_1,0}D^*_{h_2,0}]$.
\end{theorem}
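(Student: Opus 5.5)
The plan is to use the decomposition \eqref{eqn:omegahatos} and reduce the claim to a Donsker theorem for the empirical process indexed by the efficient influence functions. Multiply \eqref{eqn:omegahatos} by $n^{1/2}$:
\begin{align*}
n^{1/2}\big[\widehat{\Omega}^{os}_{n}(h)-\Omega_{0}(h)\big] = \mathbb{G}_n D^*_{h,0} + n^{1/2} r_n(h),
\end{align*}
where $\mathbb{G}_n = n^{1/2}(\mathbb{P}_n - P_0)$. This uses that $P_0 D^*_{h,0}=0$ for every $h$, which follows directly from the form of $D^*_{h,0}$ in Theorem~\ref{theorem1}. By condition (iii), $\sup_{h\in\mathcal{H}} |n^{1/2} r_n(h)| = o_P(1)$. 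So the remainder vanishes in $\ell^\infty(\mathcal{H})$, and by Slutsky's lemma in $\ell^\infty(\mathcal{H})$ it is enough to prove weak convergence of $\{\mathbb{G}_n D^*_{h,0}: h\in\mathcal{H}\}$.

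For that, I would invoke condition (ii). The class $\mathcal{F}_0=\{D^*_{h,0}: h \in \mathcal{H}\}$ is a subset of a $P_0$-Donsker class, so it is itself $P_0$-Donsker. Hence $\mathbb{G}_n$ indexed by $\mathcal{F}_0$ converges weakly in $\ell^\infty(\mathcal{F}_0)$ to a tight mean-zero Brownian bridge $\mathbb{G}'$. Its covariance is $P_0[fg]-P_0fP_0g$, which equals $P_0[D^*_{h_1,0}D^*_{h_2,0}]$ because the influence functions are centered.

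Next I would transfer from indexing by $\mathcal{F}_0$ to indexing by $\mathcal{H}$ through the map $h\mapsto D^*_{h,0}$. The composition $z\mapsto z\circ\phi$ with $\phi(h)=D^*_{h,0}$ is a continuous (indeed isometric into) linear map from $\ell^\infty(\mathcal{F}_0)$ to $\ell^\infty(\mathcal{H})$. Since weak convergence is defined through bounded functionals, the continuous mapping theorem then gives $\mathbb{G}_n D^*_{\cdot,0}\rightsquigarrow \mathbb{G}:=\mathbb{G}'\circ\phi$. This limit is a mean-zero Gaussian process on $\mathcal{H}$ with covariance $\Sigma_0$. Boundedness of sample paths needs $\sup_h |D^*_{h,0}|$ to be controlled; I would get this from the uniform boundedness of $D^v_{P_0}$ and $\Psi_{0,v}$ together with the envelope that comes with the Donsker class in (i)/(ii).

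The main obstacle is not the limit theorem itself but making sure the remainder is handled correctly and that the conditions are the right ones. Condition (i) enters mainly to control the empirical centering $h(V_i)-\mathbb{P}_n h$, which gives a term of the form $(\mathbb{P}_n-P_0)h\cdot(\text{stuff})$. Condition (ii), applied to $D^*_{h,n}$, is what one would use to show that the empirical-process term $(\mathbb{P}_n-P_0)(D^*_{h,n}-D^*_{h,0})$ is uniformly $o_P(n^{-1/2})$, via asymptotic equicontinuity plus an $L^2(P_0)$ consistency argument. Since the theorem assumes (iii) directly, both pieces are absorbed into $r_n$. In the write-up I would remark that (i) and (ii) are exactly what make (iii) verifiable, but the proof of the theorem itself needs only the Donsker property of $\mathcal{F}_0$ and Slutsky's lemma.
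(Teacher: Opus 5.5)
Your proposal is correct and follows essentially the same route as the paper: write $n^{1/2}[\widehat{\Omega}^{os}_{n}(h)-\Omega_{0}(h)] = \sqrt{n}(\mathbb{P}_n-P_0)D^*_{h,0} + n^{1/2}r_n(h)$, use (iii) to make the remainder negligible uniformly over $\mathcal{H}$, and conclude from the Donsker property of $\{D^*_{h,0}: h\in\mathcal{H}\}$. The only differences are that you obtain that Donsker property from (ii) and spell out the reindexing from $\ell^\infty(\mathcal{F}_0)$ to $\ell^\infty(\mathcal{H})$, whereas the paper attributes the Donsker property somewhat loosely to the condition on $\mathcal{H}$ in (i); your justification is the more direct one.
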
 
%\end{samepage}
%
Conditions $(i)$ and $(ii)$ restrict the complexity of the nuisance functions, their estimators, and the function class $\mathcal{H}$. The Donsker condition on $\{ D^*_{h,n}, D^*_{h,0}: h \in \mathcal{H}\}$ may be implied by the Donsker condition on \(\mathcal{H}\) under a Lipschitz-type condition. See Lemma S1 in the Supplementary Material of \cite{hudson2026inference} for details. Donsker classes are often quite large, so this assumption is not overly restrictive. Moreover, the Donsker condition is satisfied by a broad range of function classes, including functions with bounded variation norm, functions with finite VC dimension (e.g., finite-depth trees), and reproducing kernel Hilbert spaces. Nonparametric estimators can also be used, and many are known to satisfy these properties under weak conditions. Examples include penalized estimators and sieve-based procedures \citep{geer2000empirical}. For other function classes, the Donsker assumption can be verified by analyzing the metric entropy of the function class \citep{van2000asymptotic, kosorok2008introduction}. Finally, condition $(iii)$ pertains to the remainder term, ensuring that the estimator is uniformly asymptotically linear. This condition is often met by the one-step estimator, which we illustrate through specific examples in Section~\ref{illustration}.

\begin{algorithm}[t]
\caption{Bootstrap test procedure} \label{Algo1}
%\begin{algorithmic}[1]
    \textbf{Input:} Observed data $\{O_i\}_{i=1}^n$, $D^{*}_{h,n}(o)$, number of bootstrap samples $B$, significance level $\alpha$. \\
    \textbf{1: } \textbf{For each bootstrap sample} $b = 1, \dots, B$: \\
    
    \qquad \textbf{1a: } Draw $\xi^{(b)}_1, \dots, \xi^{(b)}_n$ 
   
    \qquad  \textbf{1b: } Compute $T^{(b)}(\mathcal{H}) = \sup_{h \in \mathcal{H}} \Big| \frac{1}{n} \sum_{i=1}^n \xi^{(b)}_i D^{*}_{h,n}(O_i) \Big|$\\
     
   \textbf{2: } Compute the observed test statistic $T_n(\mathcal{H}) = \sup_{h \in \mathcal{H}} \Big| \frac{1}{n} \sum_{i=1}^n D^{*}_{h,n}(O_i) \Big|$\\
   \textbf{3: } Compute the bootstrap $p$-value $\rho_B = \frac{1}{B} \sum_{b=1}^B \mathbbm{1}\Big\{ T^{(b)}(\mathcal{H}) > T_n(\mathcal{H}) \Big\}$\\
   \textbf{4: } \textbf{Reject} $H_0$ at level $\alpha$ if $\rho_B \le \alpha$
%\end{algorithmic}
\end{algorithm}

Leveraging Theorem~\ref{theorem2}, we choose as our test statistic $T_{n}(\mathcal{H})\coloneq\underset{h \in \mathcal{H}}{\text{sup }}n^{1/2}|\widehat{\Omega}^{os}_{n}(h)|$. By the continuous mapping theorem,  $\underset{h \in \mathcal{H}}{\text{sup }}n^{1/2}|\widehat{\Omega}^{os}_{n}(h)|$ converges weakly to $\underset{h \in \mathcal{H}}{\text{sup }}|\mathbb{G}(h)|$ under the null. Therefore, a test that rejects the null when $T_n(\mathcal{H})$ exceeds the $1-\alpha$ quantile of $\underset{h \in \mathcal{H}}{\text{sup }}|\mathbb{G}(h)|$ achieves type-I control at the level $\alpha$. While integrating under the exact distribution of $\underset{h \in \mathcal{H}}{\text{sup }}|\mathbb{G}(h)|$ is difficult, we can use an approximation instead. To approximate the limiting process $\underset{h \in \mathcal{H}}{\text{sup }}|\mathbb{G}(h)|$, we draw $B$ sets of $n$ $i.i.d.$ samples from a distribution with mean zero and unit variance, denoted $\xi^{(b)}_{1},\xi^{(b)}_{2},\ldots,\xi^{(b)}_{n}$ for $b=1,\cdots, B$.  
% the set of $n$ random samples from a mean zero and unit variance distribution. We calculate 
Then, the multiplier bootstrap statistic, $T^{(b)}(\mathcal{H}) = \underset{h \in \mathcal{H}}{\text{sup }}|n^{-1/2}\sum_{i=1}^n\xi_{i}D^{*}_{h,n}(O_{i})|$, 
% The so-called multiplier bootstrap samples $T^{(1)}(\mathcal{H}), T^{(2)}(\mathcal{H}),\ldots,T^{(m)}(\mathcal{H})$ 
approximate the distribution of $\underset{h \in \mathcal{H}}{\text{sup }}|\mathbb{G}(h)|$ for large $n$ and $B$; 
see \citet{hsu2016multiplier} and \citet{hudson2026inference} for discussions of such simulated processes. The detailed procedure for constructing the test for null hypotheses \eqref{hypo:3} is presented in Algorithm \ref{Algo1}.

%\textcolor{red}{Local power?}

\section{ Illustrations \label{illustration}}

In this section, we show how our proposed framework applies to Examples~\ref{examp1}--\ref{examp2} introduced in Section~\ref{examples}. Specifically, for each example, we present the EIF, $D^{*}_{h,0}$, and the one-step estimator, $\widehat{\Omega}^{os}_{n}(h)$ of $\Omega_{P_0}(h)$. We defer discussion of conditions under which $\underset{h \in \mathcal{H}}{\text{sup }} |r_n(h)|=o_{p}(n^{-1/2})$ and detailed derivation of the presented results to the Supplementary Material~\ref{sec:details-examples}.

\subsection*{ Example 1 (continued) }
    Recall that the null hypothesis for testing conditional mean independence is 
    \begin{align*}
        H_{0}:  \E[Y|X=x]=\E[Y] \quad P_{0,X}\text{-almost surely}.
    \end{align*}
    Here, $\theta_{0}=\E[Y]$, $\Psi_{0,x}=\E[Y|X=x]$, and $\Omega_{P_{0}}(h)=\int \E[Y|X=x]h^{c}_{P_{0,X}}(x)dP_{0,X}(x)$. The EIF for $\Omega_{P_0}(h)$ can be characterized using Proposition~\ref{prop}. For $\mu_0(x) = E[Y|X = x]$, $\alpha_{0}(o)=1$ and $m(o,\mu_{0}(x))=\mu_0(x)$, we can express $\Psi_{0,x}$ as a continuous linear functional of the conditional mean function. That is, $\Psi_{0,x}=\E[ m(o,\mu_{0}(X))|X=x]$. Hence, following Proposition~\ref{prop}, $D^{x}_{P_{0}}=y-\E[Y|X=x]$, and the EIF, $D^{*}_{h,0}$, and one-step estimator, $\widehat{\Omega}^{os}_{n}(h)$, of $\Omega_{P_{0}}(h)$ are given by
    \begin{align*}
          D^*_{h,0}(o) &= (y -\E[Y])
h^{c}_{P_{0,X}}(x)-  \Omega_{P_{0}}(h), \\
 \widehat{\Omega}^{os}_{n}(h)&= n^{-1} \sum_{i=1}^n ( Y_{i} - \bar{Y}_{n})h^{c}_{n}(X_{i}),
    \end{align*}
    where $h^{c}_{n}(X)= h(X)-\bar{h}_{n}$, and
    $\bar{Y}_{n}=\frac{1}{n}\sum_{i=1}^{n}Y_{i}$, and $\bar{h}_{n}=\frac{1}{n}\sum_{i}^{n}h(X_{i})$ are estimators of $\E[Y]$, and $\E[h(X)]$, respectively.

\subsection*{ Example 2 (continued) }
The null hypothesis for testing treatment effect heterogeneity is 
\begin{align*}
    H_0:\E[Y^{(1)}-Y^{(0)}|X_{s}=x_{s}] =\E[Y^{(1)}-Y^{(0)}] \quad P_{0,X_{s}}\text{-almost surely}.
\end{align*}
Under standard causal assumptions of consistency, positivity, and conditional exchangeability \citep{hernan2020whatif}, the CATE and ATE can be identified in both randomized and observational studies as $\E[\E[Y|T=1,X]-\E[Y|T=0,X]|X_{s}]]$ and $\E[\E[Y|T=1,X]-\E[Y|T=0,X]]$, respectively.  Letting $\mu_{0}(t,x)=\E[Y|T=t,X=x]$ and $\pi_{0}(t|x)=\text{Pr}(T=t|X=x)$, we denote $\tau_{0,x_{s}}=\E[\mu_{0}(1,X)-\mu_{0}(0,X)|X_{s}=x_{s}]$ and $\tau_{0}=\E[\mu_{0}(1,X)-\mu_{0}(0,X)]$ as the identified CATE and ATE, respectively. Here, $\Psi_{0,x_{s}}=\E[\mu_{0}(1,X)-\mu_{0}(0,X)|X_{s}=x_{s}]$, $\theta_{0}=\E[\mu_{0}(1,X)-\mu_{0}(0,X)]$, and $\Omega_{P_{0}}(h)=\int \E[\mu_{0}(1,X)-\mu_{0}(0,X)|X_{s}=x_{s}] h_{P_{0,X_{s}}}^{c}(x_{s}) \}dP_{0,X_{s}}(x_{s})$. Similar to Example~\ref{examp1}, we can express $\Psi_{0,x_{s}}$ as a continuous linear functional of the conditional mean function. To see this, let $\alpha_{0}(t,x)=\frac{t}{\pi_{0}(t|x)}- \frac{1-t}{1-\pi_{0}(t|x)}$ and $m(x,\mu_{0})=\mu_{0}(1,x)-\mu_{0}(0,x)$. Then, we have $\Psi_{0,x_{s}}=\E [\alpha_{0}(T,X) \mu_{0}(T,X)|X_{s}=x_{s}]$.  Hence, following Proposition~\ref{prop}, the EIF, $D^{*}_{h,0}$, and one-step estimator, $\widehat{\Omega}^{os}_{n}(h)$, of $\Omega_{P_{0}}(h)$ are given by
\begin{align*}
          D^*_{h,0}(o) &= (\mu_{0}(1,x)-\mu_{0}(0,x)  +\alpha_{0}(t,x)\{y-\mu_{0}(t,x) \} - \tau_{0} )
h^{c}_{P_{0,X_{s}}}(x_{s})-  \Omega_{P_{0}}(h), \\
 \widehat{\Omega}^{\mathrm{os}}_{n}(h)&= n^{-1} \sum_{i=1}^n (\mu_{n}(1,X_{i})-\mu_{n}(0,X_{i})  +\alpha_{n}(X_{i})\{Y_{i}-\mu_{n}(T_{i},X_{i}) \} - \tau_{n} )\left (h(X_{s,i})-\bar{h}_{n}\right),
    \end{align*}
where we have replaced the subscript $``0"$ with $``n"$ to denote an estimator of the respective parameter (i.e., $\mu_{n}(t,x)$ is the estimator of $\mu_{0}(t,x)$). 

\section{An aggregate test for improved power}\label{aggregatetest}
So far, we have discussed our inferential strategy assuming knowledge of the function class $\mathcal{H}$.
In practice, no single choice of function class can guarantee good power across all alternatives. In fact, the optimal choice of function class for ensuring good power against specific alternatives is typically unknown. To ensure reasonable power against a broader class of alternatives, we propose constructing an aggregate test that combines test statistics across different pre-specified function classes. 

Let $L \in \mathbb{N}$ be a fixed natural number, and $\{\mathcal{H}_{l}\}_{l=1}^{L}$ denote a collection of function classes. For each $l$, let $T_{n}(\mathcal{H}_{l})$ be our test statistic constructed using function class $\mathcal{H}_{l}$, and define the $L$-dimensional vector of observed test statistics $\bm{T}_{n} \coloneqq (T_{n}(\mathcal{H}_{1}), \ldots, T_{n}(\mathcal{H}_{L})) \in \mathbb{R}^{L}$. To simplify notation, we denote $\bm{T}_{n}$ by $\bm{T}^{(0)}$. The idea of the aggregate test is based on assessing how extreme a scalar aggregate summary of $\bm{T}^{(0)}$ is relative to samples from a bootstrap approximation of the summary's sampling distribution. Let $\bm{T}^{(b)} = \{T^{(b)}(\mathcal{H}_{l})\}_{l=1}^{L}$, for $b = 1, \ldots, B$, denote the vector of size $L$ drawn from the approximate limiting null distribution of $\bm{T}^{(0)}$. Since $T^{(b)}(\mathcal{H}_{l})$ is on a different scale for each $l$, we standardize both the $b$th bootstrap vector and the observed statistic $\bm{T}^{(0)}$ by shifting and scaling using the mean and standard deviation computed from $\bm{T}^{(-b)}$, the collection of all vectors except $\bm{T}^{(b)}$ with $\bm{T}^{(0)}$ included. We include $\bm{T}^{(0)}$ in $\bm{T}^{(-b)}$ to ensure exchangeability of the observed and bootstrap statistics under the null. Algorithm~\ref{Algo2} presents detailed steps for performing an aggregate test when $\bm{T}^{(0)}$ is summarized using $\|\cdot\|^{2}_{L^{2}(\mathbb{P}_{n})}$. A similar strategy is presented in \citet{shah2018goodness}, which combines multiple test statistics obtained from multiple tuning parameters in a penalized high-dimensional regression setting. Alternative aggregating functions may also be used, including the absolute maximum and $\|\cdot\|_{L^{1}(\mathbb{P}_{n})}$.

\begin{algorithm}[t]
\caption{Aggregate test procedure}\label{Algo2}
%\begin{algorithmic}[1]
\textbf{Input:} Observed vector of test statistics $\bm{T}^{(0)}=\{T_{n}(\mathcal{H}_{l})\}_{l=1}^L$; $B$ bootstrap samples $\{\bm{T}^{(b)}\}_{b=1,\dots,B}$; significance level $\alpha$. \\
\textbf{1:} \textbf{For each bootstrap sample} $b = 1, \dots, B$: and observed test statistic $\bm{T}^{(0)}$, calculate $\bm{\hat{\mu}}^{(-b)}=\Big(\hat{\mu}_{1}^{(-b)}, \ldots, \hat{\mu}_{L}^{(-b)} \Big)$ and $\bm \hat{\sigma}^{(-b)} =\Big(\hat{\sigma}_{1}^{(-b)}, \ldots, \hat{\sigma}_{L}^{(-b)} \Big)$ as 

\[ 
\hat{\mu}_{l}^{(-b)} = \frac{1}{B} \sum_{\substack{b'=0\\ b'\neq b}}^{B} T^{(b')}(\mathcal H_l),\quad \hat{\sigma}_{l}^{(-b)} = \Big[ \frac{1}{B-1} \sum_{\substack{b'=0\\ b'\neq b}}^{B} (T^{(b')}(\mathcal H_l)-\hat{\mu}_{l}^{(-b)})^2 \Big]^{1/2}.
\] 
\qquad \textbf{1a:} Compute the observed aggregate test statistic 
\[ 
\tilde{Q}_0 = \frac{1}{L} \sum_{l=1}^L \Big( \frac{T_{n}(\mathcal{H}_{l}) - \hat{\mu}_{l}}{\hat{\sigma}_{l}} \Big)^2, 
\] 
where $\hat{\mu}_{l}$ and $\hat{\sigma}_{l}$ correspond to $\hat{\mu}_{l}^{(-b)}$ and $\hat{\sigma}_{l}^{(-b)}$ for when $b=0$. \\
\textbf{2:} \textbf{For each bootstrap sample} $b = 1, \dots, B$, 

\qquad \textbf{2a:} Compute the $b^{\text{th}}$ bootstrap transformed value using the aggregate function 
\[ \tilde{Q}_{b} = \frac{1}{L} \sum_{l=1}^L \Big( \frac{T^{(b)}(\mathcal{H}_{l}) - \hat{\mu}_{l}^{(-b)}}{\hat{\sigma}_{l}^{(-b)}} \Big)^2.
\] 

\textbf{3:} Calculate aggregate $p$-value; $P_{\text{aggreg}} = \frac{1}{B+1} \Big(1 + \sum_{b=1}^B \mathbb{I}\left\{ \tilde{Q}_b \ge \tilde{Q}_0 \right\} \Big).$ 

\textbf{4:} \textbf{Reject} $H_0$ at level $\alpha$ if $P_{\text{aggreg}} \le \alpha.$ 
%\end{algorithmic} 
\end{algorithm}

Under the null, the collection $\{\bm{T}^{(0)}, \bm{T}^{(1)}, \ldots, \bm{T}^{(B)}\}$ is exchangeable. Since $\|\cdot\|^{2}_{L^{2}(\mathbb{P}_{n})}$ is a symmetric aggregation function that preserves exchangeability, standard results for Monte Carlo tests \citep{davison1997bootstrap, shah2018goodness} imply that the $p$-value from the proposed aggregate test is valid, i.e., $\Pr(P_{\mathrm{aggreg}} \le \alpha \mid H_{0}) \le \alpha$. Other order-preserving continuous aggregation functions, such as  arithmetic mean or  maximum, could also be used instead of $\|\cdot\|^{2}_{L^{2}(\mathbb{P}_{n})}$. In the next section, we discuss specific examples of $\mathcal{H}$ and their merits and limitations.

\section{Choices of function class}
To obtain a test with desirable power properties,  $\mathcal{H}$ needs to be large enough such that whenever $H_{0}$ fails, there exist some $h^{*} \in \mathcal{H}$ with $|\Omega_{P_{0}}(h^{*})| > 0$.  We discuss two choices of $\mathcal{H}$ that are popular for their computational and statistical properties, namely, the thresholded indicator function class and the reproducing kernel Hilbert space. Other choices of function class can be seamlessly incorporated into the proposed aggregate testing framework. 

\subsection{Thresholded indicator}

First, recall that for $p=1$, the maximizer of \eqref{newhypothesis} among the class of bounded functions is $h^{*}=\text{sign}\{ \Psi_{0,v}\}$ up to constant proportionality. A simple function class that can approximate bounded functions on  $\mathcal{V}$ is the class of thresholded indicators defined on each value in $\mathcal{V}$. In fact, if $\Psi_{0,v}$ is monotone on $\mathcal{V}$, the maximizer, $h^{*}$, belongs to this class.  We define the class of thresholded indicators on $\mathcal{V}$ as $\mathcal{H}=\{ h_{v_0}(v)={\mathbbm{1}\{v \le v_0 \}} :  v_0 \in \mathcal{V} \}$. This choice of $\mathcal{H}$ is $P_{0}$-Donsker \citep{van2000asymptotic}, and the corresponding  test statistic is
 \begin{align*}
     T_n(\mathcal{H})=\underset{v_0 \in \mathcal{V}}{\text{max}} \quad\Big |\frac{1}{n} \sum_{i=1}^n\Big [ D_{n}^{v}(O_i) - \frac{1}{n}\sum_{j=1}^{n} \Psi_{n,v}(V_{j}) +\Psi_{n,v}(V_{i})\Big] \Big \{\mathbbm{1}\{V_i \le v_0 \}- \frac{1}{n}\sum_{k=1}^{n}\mathbbm{1}\{V_{k}\le v_0 \}\Big\}\Big |,
 \end{align*}
where, as before, $\Psi_{n,v}$ and $D^{v}_{n}$ denote the function-valued parameter and its corresponding conditional influence function with plug-in nuisance estimators.

This class is appealing due to its computational simplicity and ease of implementation. However, when $\Psi_{0,v}$ is highly complex (e.g., sinusoidal or quadratic), choosing $\mathcal{H}$ as simple as the class of thresholded indicators may result in a loss of power of the test. 

\subsection{Reproducing kernel Hilbert space} \label{RKHS_class}

The reproducing kernel Hilbert space (RKHS) is a rich function class that provides a convenient way to approximate the norm in \eqref{newhypothesis} when $p=2$. Thus, by selecting $\mathcal{H}$ as the RKHS,  we can obtain a powerful test even when $\Psi_{0,v}$ is non-monotone. We consider the RKHS induced by the eigenfunctions $\psi_{2j-1}(x) = \sqrt{2}\cos(2j\pi x)$ and $\psi_{2j}(x) = \sqrt{2}\sin(2j\pi x)$ with corresponding eigenvalues $\lambda_{2j-1}=\lambda_{2j}=(2j\pi)^{-4}$ for $j=1,2,\ldots$. Each function, $h$, in the space can be expressed as $h(x)=\sum_{j=1}^{\infty} a_j \psi_j(x)$ such that its norm satisfies $\sum_{j=1}^{\infty}\frac{a_j^{2}}{\lambda_{j}}<\infty$ for $a_{j}\in \mathbb{R}$. Because the eigenvalues decay as $\lambda_j \asymp j^{-4}$, the coefficients $a_j$ must decay sufficiently fast for the norm to remain finite. Consequently, the behavior of functions in $\mathcal{H}$ is largely determined by the low-frequency components of the expansion, and the contribution of higher-order terms becomes increasingly negligible. This motivates approximating $h$ using only the first $D$ basis functions for a sufficiently large integer $D$, which provides a good approximation to functions in $\mathcal{H}$. Let $\Gamma$ denote a $D\times D$ diagonal matrix with entries $\Gamma_{j,j} = 1/\lambda_j$, and let $\mathbf{V}$ denote a $D\times D$ matrix with entries given by estimates of the covariance between each pair of eigenfunctions. The RKHS norm and variance for a function $h(x)=\sum_{j=1}^{D} a_j \psi_j(x)$ are then given as $\bm{a}^{T} \Gamma \bm{a}$ and $\bm{a}^{T}\mathbf{V}\bm{a}$, respectively. To balance smoothness and variability, we consider the following constrained function class 
\begin{align*}
\mathcal{H} = \Big \{\sum_{j=1}^D a_j\psi_j(x) : a_j \in \mathbb{R}, \; \gamma \bm{a}^T\Gamma \bm{a} + (1-\gamma)\bm{a}^T\mathbf{V} \bm{a} \le C, \text{ for some fixed } C>0, \text{ and } \gamma\in [0,1] \Big \},
\end{align*}
where $\bm{a}=(a_1,a_2,\ldots,a_D)$. The parameter $\gamma$ controls the trade-off between smoothness and variability: values of $\gamma$ close to one place greater weight on the RKHS norm, $\bm{a}^T\Gamma\bm{a}$, and therefore enforce smoother functions; values close to zero place greater weight on the variance term,  $\bm{a}^T\mathbf{V}\bm{a}$, and therefore restrict the variability of the function; the constant $C$ determines the scale of the constraint and does not affect the shape of the function class. See Supplementary Material~\ref{sec:rkhs} for a more detailed discussion on the RKHS and motivation for the constraint on the function class.

%\subsubsection{Calculating the test statistic under RKHS}
A key advantage of using the RKHS is that we can explicitly express the one-step estimator, $\widehat{\Omega}_{n}^{\mathrm{os}}(h)$, of $\Omega_{P_{0}}(h)$ as a linear function in $\bm{a}$. That is,
\begin{align*}
    \widehat{\Omega}_{n}^{\mathrm{os}}(h) 
    &= \frac{1}{n} \sum_{i=1}^n \Big[ D^{v}_{n}(O_i) - \frac{1}{n}\sum_{k=1}^{n}\Psi_{n,v}(V_{k}) + \Psi_{n,v}(V_{i}) \Big ]
    \Big (\sum_{j=1}^D a_j\psi_j(V_i)-\frac{1}{n}\sum_{k=1}^n \sum_{j=1}^D a_j\psi_j(V_k)\Big) 
    = \bm{U}_n^{T}\bm{a},
\end{align*}
where $\bm{U}_n=\bm{U}(\bm{\phi}^*_n,\bm{\psi})=n^{-1}\bm{\Phi}^{T}\bm{\phi}^*_n$ is a $D$-dimensional vector. Here $\bm{\phi}^*_n$ is an $n$-dimensional vector whose $i$th entry is $D^{v}_{n}(O_i) - \frac{1}{n} \sum_{k=1}^{n}\Psi_{n,v}(V_{k}) + \Psi_{n,v}(V_{i})$, and $\bm{\Phi}$ is an $n\times D$ matrix whose $j$th column has entries $\psi_j(V_i)-\frac{1}{n}\sum_{k=1}^n\psi_j(V_k)$ for $i=1,2,\ldots,n$. Given the simplification above, we calculate our test statistic by solving the following maximization problem:
\begin{equation}
   T_n(\mathcal{H})= \underset{\bm{a} \in \mathbb{R}^D}{\max} \left \{n^{1/2} \bm{U}_n^T\bm{a} :   \gamma \bm{a}^T\Gamma \bm{a} + (1-\gamma)\bm{a}^T\mathbf{V} \bm{a}  \le C \right \}. 
   \label{optim1}
\end{equation}
The solution to problem \eqref{optim1} is obtained by maximizing the Lagrangian associated with \eqref{optim1}:
\begin{equation}
  \hat{\bm{a}}_{\eta,\gamma} =\underset{\mathbf{a} \in \mathbb{R}^D}{\text{argmax}} \left \{ n^{1/2}\bm{U}_n^T\bm{a}  -\frac{\eta}{2} \left(\gamma \bm{a}^T\Gamma \bm{a} + (1-\gamma)\bm{a}^T\mathbf{V} \bm{a} \right) \right \},
   \label{optim2}
\end{equation}
where $\eta>0$ is chosen to satisfy the constraint in \eqref{optim1}. 
For fixed $\eta$, the solution to \eqref{optim2} is given by
\begin{align*}
    \hat{\bm{a}}_{\eta,\gamma}
    = n^{1/2}\eta^{-1}
    \left(\gamma\Gamma+(1-\gamma)\mathbf{V}\right)^{-1}
    \bm{U}_n .
\end{align*}
We can thus express our test statistic as
\begin{align*}
    T_n(\mathcal{H})
    =
    \eta^{-1}
    n^{1/2}\bm{U}_n^T
    \left(\gamma\Gamma+(1-\gamma)\mathbf{V}\right)^{-1}
    n^{1/2}\bm{U}_n \, .
\end{align*}

The Lagrange multiplier $\eta$ primarily affects the scale of the test statistic but does not influence the shape of its sampling distribution. For this reason, we typically fix $\eta$ at a positive constant. In contrast, the tuning parameter $\gamma$ governs the trade-off between smoothness and variance of the estimated functions and directly influences the sampling distribution of the test statistic. Consequently, each value of $\gamma$ indexes a distinct function class and yields a corresponding test statistic with its own sampling distribution.

\section{Numerical studies \label{simulation}}
In this section, we investigate the finite sample performance of the proposed test in the context of the two working examples introduced in Section~\ref{examples}. For each example, we study various settings under the null and alternative hypotheses. We implement our test using the thresholded indicator and  RKHS as function classes. We truncate the basis expansion at $D = 100$ for the RKHS function class. From the RKHS, we construct $K=50$ function classes based on a grid of $50$ different $\gamma$ values, given by $\gamma_{max}\left[ \frac{\gamma_{min}}{\gamma_{max}}\right]^{\frac{k-1}{(K-1)}}$, where $k=1,2,\ldots,K$, $\gamma_{max}=1\times 10^{-3}$, and  $\gamma_{min}=1\times10^{-5}$. We implement the aggregate test by combining the thresholded indicator and the $K$ RKHS function classes. We approximate the null limiting distribution using $B = 800$ multiplier bootstrap samples.

As an alternative to the proposed aggregate test, we also consider the Cauchy combination method for combining multiple tests \citep{liu2020cauchy}, which we refer to as the Cauchy test. To investigate the advantages of combining different function classes, we also implement our procedure using individual function classes only, such as the thresholded indicators and the RKHS with only a smoothing penalty (corresponding to $\gamma = 1$), as well as an aggregate test combining only the $50$ RKHS function classes but without including the threshold class. We refer to the test based on the thresholded indicators as Indicator test, the RKHS with only a smoothing penalty as FixedRKHS test, and the test that combines the $50$ RKHS function classes as CombinedRKHS test. For Examples~\ref{examp1} and~\ref{examp2}, we compare the performance of the proposed tests with tests specifically designed for these problems. We conduct 500 Monte Carlo simulations with sample sizes $n \in \{125, 250, 500, 1000, 2000\}$ across all settings.

\subsection*{Example 1: \normalfont{Testing for conditional mean dependence}} 
For $X \sim \text{Unif}(-1,1)$ and $\varepsilon \sim N(0,1)$ independent of $X$, we consider three conditional outcome distributions:
\begin{itemize}
    \item Setting 1: $Y=\varepsilon$
    \item Setting 2: $Y=0.25X + \varepsilon$
    \item Setting 3: $Y=\text{sin}(\pi X \text{sign}(X) )+ \varepsilon$ 
\end{itemize}

\noindent For each setting, we compare the performance of our proposed tests with the approach by \citet{williamson2023general} implemented in the R package \texttt{vimp} \citep{vimp}. This approach assesses whether the expected least squared error loss resulting from predicting $Y$ using the conditional mean given $X$, falls below the marginal variance of $Y$. 

\begin{figure}[t]
    \centering
    \includegraphics[width=0.90\textwidth]{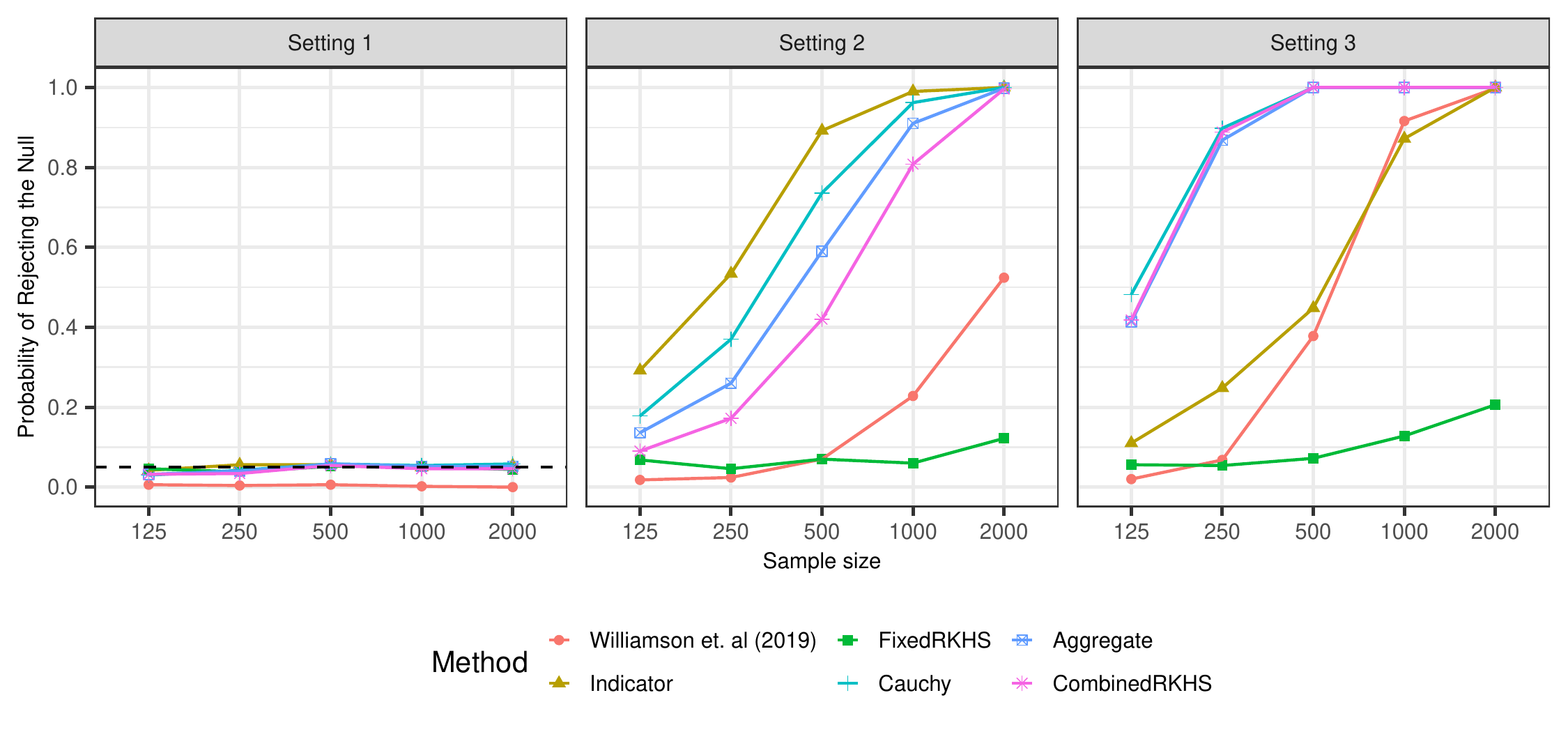}
    \caption{Empirical probability of rejection for hypotheses in Example 1 under different data generating mechanisms, at different sample sizes. Horizontal wide-dash line indicates the nominal 0.05 test size.}
\label{fig1}
\end{figure}

The results are shown in Figure~\ref{fig1}. Under Setting 1, where the null hypothesis $\E[Y|X]=\E[Y]$ holds, we observe that all the tests from our proposed framework control the type-I error. However, the method of \cite{williamson2023general} appears to be conservative. Under Setting 2, where the alternative is linear, and hence monotone, the Indicator test has the highest power. We observe slower increase in power as $n$ increases for the Cauchy and Aggregate tests since they are based on RKHS, which accommodates more complex functions. Nonetheless, the Aggregate and Cauchy tests are still able to achieve high power similar to the Indicator test for larger sample sizes. In Setting 3, where the conditional mean function is more complex compared to Setting~2, the Aggregate and Cauchy tests achieve high power at small sample size. The power of the Indicator test, on the other hand, increases more slowly with sample size. The power of the test of \cite{williamson2023general} increases with sample size in both Settings 2 and 3, but at a much slower rate than the other proposed tests. For the test based on $\gamma = 1$ under the RKHS class (``FixedRKHS''), we observe good type-I error control but very low power in both Settings~2 and 3. This occurs because the FixedRKHS test uses a more restrictive function class, which limits the ability of the test to detect departures from the null.

\subsection*{Example~2: \normalfont{Assessing treatment effect heterogeneity}}
We generate our treatment assignment variable $T$ randomly from $\{1,0\}$ with probability 0.5 corresponding to a randomized experiment, our conditioning covariate $W \sim \text{Unif}(-1,1)$, and white noise $\varepsilon \sim N(0,0.5)$ independent of $W$. We consider three settings for the outcome distribution, corresponding to different specifications of the CATE:
\begin{figure}[t]
    \centering
    \includegraphics[width=0.90\textwidth]{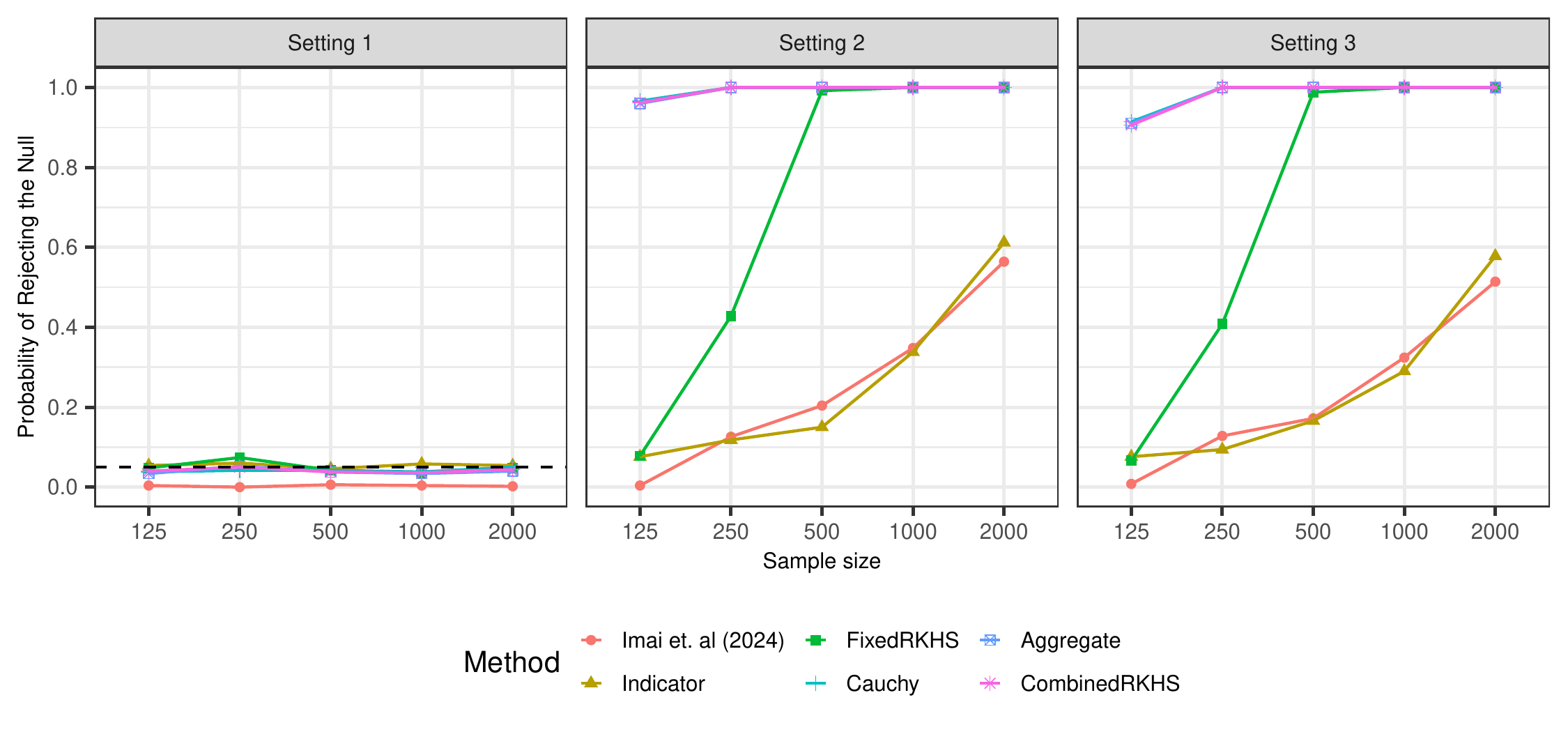}
    \caption{Empirical probability of rejection for hypotheses in Example 3 under different data generating
mechanisms, at different sample sizes. Horizontal wide-dash line indicates the nominal 0.05 test size.}
\label{fig3}
\end{figure}

\begin{itemize}
    \item Setting 1: $Y=\beta_0 + \beta_1W + \gamma T + \varepsilon$ 
    \item Setting 2: $Y=\beta_0 + \beta_1W + (\gamma_0 +\gamma_1 W ) T + \varepsilon$ 
    \item Setting 3:  $Y=\beta_0 + \beta_1W + (\gamma_0 +\gamma_1 \text{sin}[W] ) T + \varepsilon$ 
\end{itemize}
Under Setting 1, there is no  treatment effect heterogeneity. Settings 2 and 3 correspond to  linear and nonlinear effect heterogeneity, respectively. We estimate the nuisance functions $\pi_{0}(t|x)$ and $\mu_{0}(t,x)$ using generalized additive models \citep{hastie1990generalized} with natural cubic spline components, using the  \texttt{mgcv} R package. For each setting, we compare our proposed tests with the recently proposed test of \cite{imai2024statistical}, implemented in the \texttt{evalITR}  R package \citep{evalITR}. This method assesses treatment effect heterogeneity by testing equality of the sorted group average treatment effects (GATES) proposed by \cite{chernozhukov2017fisher}.
 All tests exhibit good control of type-I error, even when the sample size is small. The method of  \cite{imai2024statistical} is conservative as the sample size increases in Setting 1. Under Settings 2 and 3, the Aggregate and Cauchy tests achieve high power even for $n= 125$. In these settings the low power of the test by \cite{imai2024statistical} may be due to the sample splitting requirement of the test.

\section{Treatment effect heterogeneity in HER2+ BCa\label{application}} HER2-positive breast cancer (HER2-positive BCa) is a subtype driven by over-expression of the human epidermal growth factor receptor 2 (HER2) protein, often resulting from an excess of HER2 gene copies in cancer cells. This over-expression accelerates cell growth and progression, leading to a more aggressive disease course without treatment. According to the American Cancer Society, approximately $15-20\%$ of women with breast cancer are diagnosed with HER2+ disease. The standard treatment for HER2+ patients includes targeted therapies like trastuzumab, an intravenous antibody that inhibits HER2 production and has been shown to extend disease-free survival in affected women \citep{joensuu2006adjuvant}. 

Additional findings by \cite{prat2014based}, based on the NeoAdjuvant Herceptin (NOAH) clinical trial, suggest that HER2+ patients classified into the HER2-enriched (HER2-E) subtype have higher 3-year pathologic complete response (pCR) rates when treated with trastuzumab and chemotherapy compared to chemotherapy alone. Classification was performed using PAM50, a diagnostic test based on the expression levels of 50 genes \citep{parker2009supervised}. However, the specific genes driving these differences in response to trastuzumab within the HER2+ population remain unclear. 

\begin{table}[b]
\centering
\scalebox{0.8}{
\begin{tabular}{lcccccccccc}
\toprule
\textbf{Method}  & BAG1 & CDC20 & ERBB2 & MYC  & ACTR3B &  BLVRA & CCNE1 &  FGFR4  & SFRP  \\
\midrule
Aggregate  & 0.034  & 0.647 & 0.447& 0.917 & 0.210 & 0.177 & 0.556  & 0.990 & 0.849 \\
\addlinespace
Cauchy  & 0.031   & 0.283 & 0.764 & 0.409 & 0.944 &  0.971 & 0.814 &  0.350 &  0.551  \\
\addlinespace
Imai et al.\ (2024)  & 0.124   & 0.149 & 0.449 & 0.332  & 0.397  & 0.355 & 0.360 & 0.715 &  0.210  \\
\bottomrule
\end{tabular}}
\caption{$p$-values from the analysis based on the NOAH study. The null hypothesis is that there is a constant average treatment effect across values of each of the nine genes.}
\label{tab:2}
\end{table}

We apply our proposed test to investigate whether there is evidence against effect homogeneity with respect to gene expression biomarkers. We use data from the NOAH trial (GEO accession GSE50948), which includes 334 participants, 235 of whom were HER2+. The HER2-positive patients were randomized to receive either trastuzumab with neoadjuvant chemotherapy $(T=1)$ or neoadjuvant chemotherapy alone $(T=0)$. Baseline tumor gene expression levels passing quality control were available for $n=111$ HER2+ patients. The primary outcome is pCR within three years, defined as the absence of invasive breast cancer after treatment.

\begin{figure}[t]
    \centering
    \includegraphics[width=1\textwidth]
    %{Attachments/HERS_plot.pdf}
    {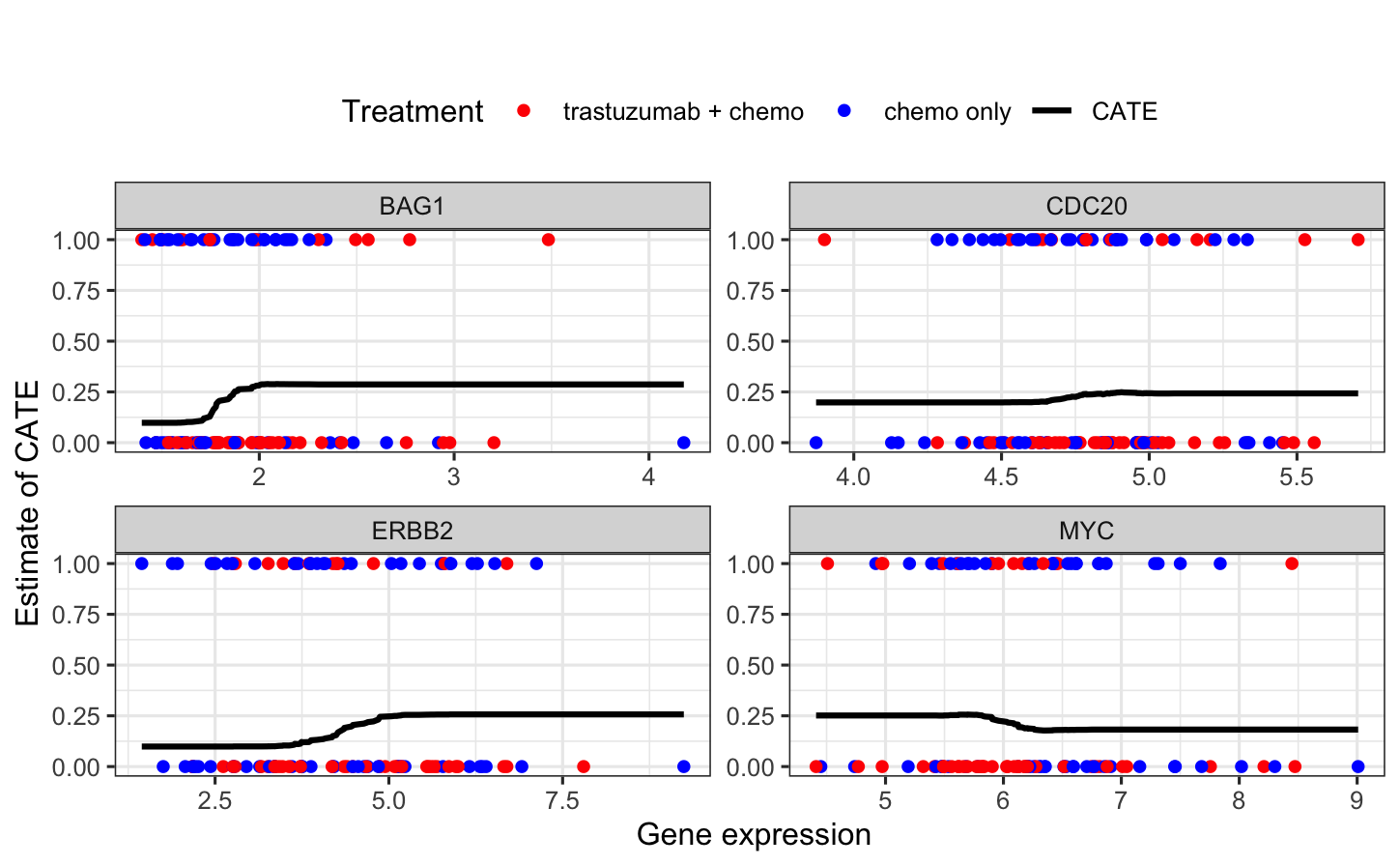}
    \caption{ Estimate of CATE using causal forest \citep{athey2019machine} for each of the four genes (BAG1, CDC20, ERBB2, MYC).}
    \label{fig4}
\end{figure}

Following \cite{roth2018framework}, we consider the nine genes identified by \cite{prat2014based} as demonstrating predictive power comparable to the full 50-gene panel used in the PAM50 test. Table~\ref{tab:2} presents the $p$-values from the proposed tests (Aggregate and Cauchy) alongside those from the test of 
\cite{imai2024statistical}. Results for four genes are presented here, 
with the remaining five genes discussed in Supplementary~\ref{sec:additional-data}. The Aggregate and Cauchy tests are implemented using the same grid of 
$\lambda$ values as in the simulation study. Among the nine genes, the proposed 
tests yield $p$-values below $0.05$ for the BAG1 gene, suggesting evidence 
of heterogeneity. In contrast, the test of \cite{imai2024statistical} 
yields its lowest $p$-value of $0.100$ for the ERBB2 gene. All other genes 
exhibit large $p$-values across all three tests.

Figure~\ref{fig4} illustrates the estimated CATE as a function of expressions of four selected genes. BAG1 shows strong evidence of a non-constant CATE, suggesting the presence of treatment effect heterogeneity. This pattern indicates that patients with relatively low BAG1 expression may benefit less from 
trastuzumab compared with chemotherapy alone, whereas patients with high BAG1 expression may derive greater benefit. Although the CATE curves for CDC20, ERBB2, and MYC appear non-constant, our tests did not find them statistically significant.

\section{Discussion \label{conclusion}}

We presented a general nonparametric test for assessing whether the evaluation of a smooth functional at a conditional distribution is constant in the conditioning variables. We develop a test that is based on representing norms as a supremum taken over a collection of finite-dimensional statistical parameters indexed by functions in a prescribed function class. Our approach ensures control of the type-I error rate for any appropriately selected smooth function class. Moreover, implementing our test only requires the derivation of the first-order influence functions, and we showed that, in many cases, influence functions can be readily obtained by leveraging existing results. Specifically, we established results that provide the explicit form of the EIF when the statistical functional of interest is a linear transformation of the conditional mean, or when the statistical functional does not depend on the distribution of the conditioning variables.

The choice of function class $\mathcal{H}$ can impact the power of the test. To improve the power, we proposed an aggregate test that combines test statistics obtained from different function classes. While the aggregate test offers improved power, the optimal choice of function class, or the tuning parameter for the RKHS class, remains a challenging problem in nonparametric testing settings. 

Our work offers several future research directions. Firstly, our results require that the nuisance functions and their estimators are not overly complex; that is, they must belong to a Donsker class. This may preclude use of some common data-adaptive estimators such as random forests and neural network. One approach to relaxing this condition is constructing a cross-fitted  version of our estimators \citep{chernozhukov2018double}. Secondly, in this work, we implement our test using two function classes: the class of thresholded indicators and RKHS induced by the Sobolev space; however, other function classes such as functions with bounded variation norm, functions with finite VC dimension (e.g., finite-depth trees), and RKHS with other kernels, could also be used. A formal analysis of how power is affected by different choices of function classes would be an interesting topic of future research.

\section*{Acknowledgements}
The authors gratefully acknowledge funding from the U.S. National Institutes of Health. 
\section*{Additional Information}
 The \texttt{R} code for implementing the proposed test (Algorithm~\ref{Algo1} \& \ref{Algo2}) along with the examples presented in the paper is available at \url{https://github.com/albertosom/NPtest}.

\bibliographystyle{apalike}
\bibliography{reference}

\clearpage
%\section{Supplementary Material \label {supp}}

\begin{center}
    \Large Supplemental Materials to ``\emph{A general nonparametric framework for testing hypotheses about function-valued parameters''}
\end{center}
\renewcommand{\thesection}{S\arabic{section}}
\setcounter{page}{1}
\setcounter{section}{0}
\renewcommand{\thefigure}{S.\arabic{figure}}

\setcounter{figure}{0}

\renewcommand{\thetable}{S\arabic{table}}

\renewcommand{\thelemma}{S\arabic{lemma}}

\renewcommand{\theproposition}{S\arabic{proposition}}

\setcounter{figure}{0}

\section{Details on the reproducing kernel Hilbert space function (RKHS) class } \label{sec:rkhs}
Let $\mu$ be a Lebesgue measure over $\mathcal{X}$, a compact subset of $\mathbb{R}^{d}$ and consider the function class $L^{2}(\mathcal{X};\mu)= \left \{f(x) \in\mathbb{R}: \int f^2(x)d\mu < \infty \right \}$.  We denote  $\mathbb{K}$ as a positive semidefinite universal kernel function and $\mathbb{H}$ the associated reproducing kernel Hilbert space. A kernel $\mathbb{K}$ is called a \textit{Mercer kernel} if it is continuous and $\int_{\mathcal{X}\times \mathcal{X}} \mathbb{K}^{2}(x_1,x_2)d\mu(x_1)d\mu(x_2) <\infty$. For a given a Mercer kernel, $\mathbb{K}$, there exist a set of eigenfunctions  $(\psi_j)^{\infty}_{j=1}$,  and non-negative eigenvalues $(\lambda_{j})^{\infty}_{j=1}$ that induces $\mathbb{H}$ given below as
\begin{align*}
        \mathbb{H}= \left \{ h=\sum_{j=1}^{\infty}\alpha_{j}\psi_{j} \quad  | \quad \sum_{j=1}^{\infty}\alpha_{j}^{2} < \infty \text{ and }\sum_{j=1}^{\infty}\frac{\alpha_{j}^2}{\lambda_j} < \infty  \right\}.
    \end{align*}

The kernel has the representation $\mathbb{K}(x_1, x_2) = \sum_{j=1}^\infty \lambda_j \phi_j(x_1)\phi_j(x_2)$  and the space is endowed with inner product $\langle f,g \rangle_{\mathbb{H}}=\sum_{i=1}^{\infty}\frac{\langle f,\psi_{i}\rangle\langle f,\psi_{i}\rangle}{\lambda_i}$, where $\langle \cdot,\cdot\rangle$ denote the inner product in $L^{2}(\mathcal{X};\mu)$. The smoothness of any function $h \in \mathbb{H}$ can be measured by the RKHS norm
\begin{align*}
   \|h\|^2_{\mathcal{H}} =\langle h,h \rangle_{\mathbb{H}} =\sum_{i=1}^\infty \frac{a^2_i}{\lambda_i}.
\end{align*}
As presented in Section~\ref{RKHS_class}, we consider the RKHS induced by the eigenfunctions $\psi_{2j-1}(x) = \sqrt{2}\cos(2j\pi x)$ and $\psi_{2j}(x) = \sqrt{2}\sin(2j\pi x)$ with their corresponding eigenvalues $\lambda_{2j-1}=\lambda_{2j}=(2j\pi)^{-4}$ for $j=1,2,\ldots$. For some large positive integer $D$, we define the truncated version of $\mathbb{H}$ as
\begin{align*}
   \mathbb{H}_{D} = \Bigg \{\sum_{j=1}^D a_j\psi_j(x) : a_j \in \mathbb{R}, \;  \bm{a}^T\Gamma \bm{a} < \beta, \text{ for some fixed } \beta>0 \Bigg \}
\end{align*}
where $\Gamma$ is the same as defined in Section~\ref{RKHS_class}. The constraint $ \bm{a}^T\Gamma \bm{a} < \beta$ only controls the roughness of the function but not their shape complexity. Large values of $\beta$ relaxes the smoothness penalty, allowing functions with larger magnitude, while small values of $\beta$ shrinks functions in the function class toward zero; however, in both cases, the shape of the functions remains essentially unchanged. To construct a test with power against a broad class of alternatives, we require the function class to allow variation in shape. In particular, we want constraints on the function class that affect the sampling distribution of the test statistic through its distributional shape rather than through its scale. It is also beneficial to restrict shapes that are likely to only improve power against implausible alternatives. Therefore, we introduce a variance constraint on the estimated functions in the RKHS. 

Let $\mathbf{V}$ is the $D\times D$ matrix with entry $(d_1,d_2)$ given as
\begin{align*}
      \mathbf{V}_{d_1,d_2}= n^{-1}\sum_{i=1}^n \left\{ \psi_{d_1}(x_i)-n^{-1}\sum_{i=1}^n\psi_{d_1}(x_i) \right \} \left \{   \psi_{d_2}(x_i)-n^{-1}\sum_{i=1}^n\psi_{d_2}(x_i) \right\}.
  \end{align*}
  The empirical variance for each function $\sum_{j=1}^D a_j\psi_j(x)$ is $\bm{a}^T\mathbf{V} \bm{a} $. Putting all together, our function class that balances smoothness and variance constraint is 
\begin{align*}
\mathcal{H} = \Big \{\sum_{j=1}^D a_j\psi_j(x) : a_j \in \mathbb{R}, \; \gamma \bm{a}^T\Gamma \bm{a} + (1-\gamma)\bm{a}^T\mathbf{V} \bm{a} \le C, \text{ for some fixed } C>0, \text{ and } \gamma\in [0,1] \Big \}
\end{align*}

\newpage
\section{Additional results from data application}\label{sec:additional-data}
In this section, we present plot of estimate of the CATE for the other five genes (ACTR3B, BLVRA, CCNE1, FGFRA, SFRP1). We observe that the CATE curves are all almost flat, suggesting no differential effect of each gene on treatment response.

\begin{figure}[h]
    \centering
    \includegraphics[width=1\textwidth]
    %{Attachments/HERS_plot.pdf}
    {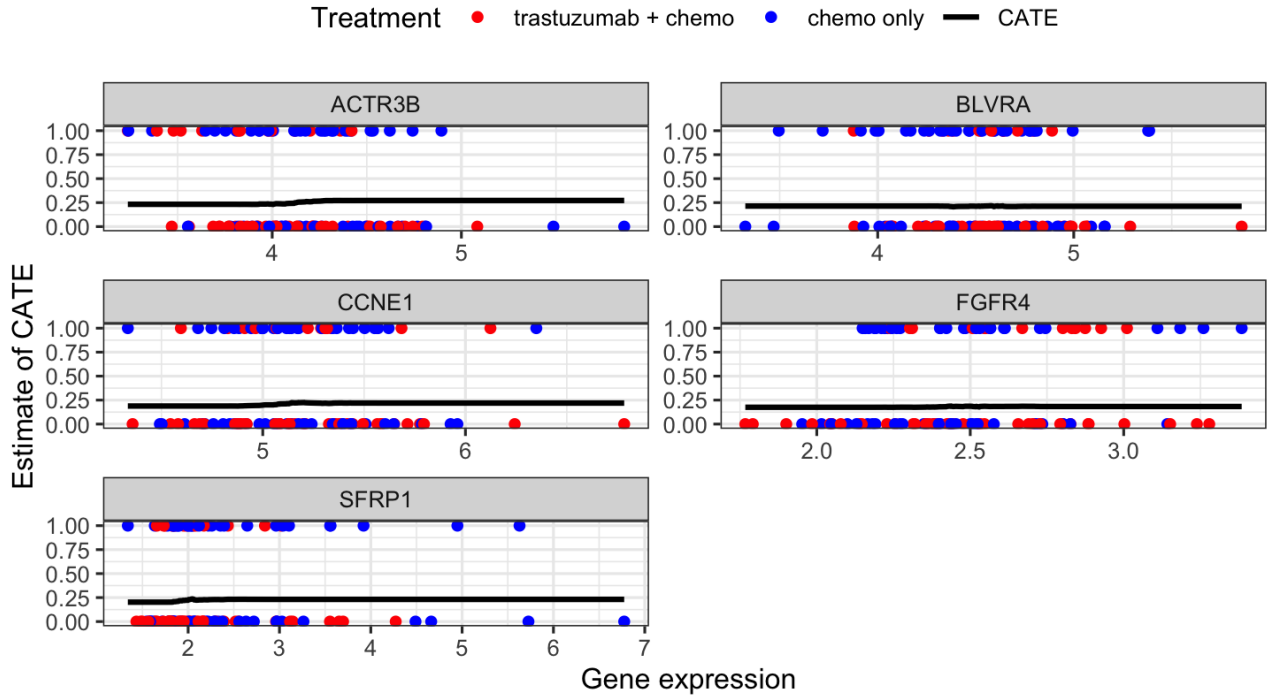}
    \caption{ Estimate of CATE using causal forest \citep{athey2019machine} for each of the five genes (ACTR3B, BLVRA, CCNE1, FGFRA, SFRP1).}
    \label{fig4}
\end{figure}

\section{ Discussion of additional examples}\label{sec:additional-examples}
% \subsection{Additional Examples}
In this section, we discuss two additional examples: testing the constancy of the conditional covariance and assessing treatment effect heterogeneity in survival settings. For each example, we show that the testing problem fall within our hypotheses class, derive the corresponding efficient influence function (EIF) and one-step estimator needed to construct the test. We also present numerical studies for the hypothesis test concerning the constancy of the conditional covariance.

\subsection{Example 3 \normalfont{(Testing constancy of conditional covariance)}}
\setcustomref{examp3}{3}  % Manually set reference to 3

\noindent Inference on the conditional covariance function appears naturally in many scientific studies. For example, in psychometric studies, it is of interest to assess whether there is significant evidence to support that the symmetric relationship between speed and accuracy (assumed to be linear) depend on other continuous variables such as age and attention when performing cognitive task \citep{tu2024cocoa}. Such hypothesis can be framed as testing whether the conditional covariance function is constant. To state this formally, let $O=(X,Y,Z) \sim P_{0}$, where $X, Y \in \mathbb{R}$ are two we wish to study their association, and $Z \in \mathbb{R}^{d}$ is a set of covariate. When $d=1$, we particularly interested in the setting where $Z$ is continuous. We define our smooth map as $\Psi(P)=\int (x-\int xdP)(y-\int y dP) dP$ for $P \in \mathcal{M}$, then the function-valued and null parameters are   $\Psi_{0,z}=\E[(Y-\E[Y|Z=z])(X-\E[X|Z=z])|Z=z]$ and $\theta_0 = \E[\E[(Y-\E[Y|Z])(X-\E[X|Z])|Z]]$, respectively. The null hypothesis of constancy of the conditional covariance function can be expressed as 
\begin{align*}
    H_0: \E[(Y-\E[Y|Z=z])(X-\E[X|Z=z])|Z=z]=\theta_{0} \quad P_{0,Z}\text{-almost surely.}
\end{align*}
\noindent For the special case where $\theta_{0}=0$, the conditional independence test based on the generalized covariance measure \citep{shah2020hardness} can be used. However, for our scientific question of interest, $\theta_{0}$ is often not zero and hence the test by \cite{shah2020hardness} generally does not apply.

\noindent To implement our test, we to derive the EIF, $D^{*}_{h,0}(o)$, of $\Omega_{P_{0}}(h)$. Lets denote the finite-dimensional parameter from the marginalization of $\Psi_{0,z}$ as $\widetilde{\Psi}_{0}=\E[\E[(Y-\E[Y|Z])(X-\E[X|Z])|Z]]$, then $\widetilde{\Psi}_{0}$ is pathwise differentiable and its EIF is given in \cite{xiang2020flexible} and \cite{kennedy2024semiparametric}  as
\begin{align*}
    \widetilde{D}_{P_{0}}(o) & = (y-\E[Y|Z=z])(x-\E[X|Z=z]) - \widetilde{\Psi}_{0}.
\end{align*}

Following Proposition~\ref{prop2}, the EIF,  $D^{*}_{h,0}$, can be expressed as,
\begin{align*}
       D^{*}_{h,0}(o)&=\Big ((y-\mu_{0,y|z})(x-\mu_{0,x|z})-\widetilde{\Psi}_{0} \Big)h^{c}_{P_{0,Z}}(z)
       - \int \Psi_{0,Z}h^{c}_{P_{0,Z}}(z)dP_{0,Z}(z),
\end{align*}
and the one-step estimator, $\widehat{\Omega}^{os}_{n}(h)$, of $\Omega_{P_{0}}(h)$ is
\begin{align*}
        \widehat{\Omega}^{os}_{n}(h)&= n^{-1} \sum_{i=1}^n \left[ \left(Y_{i}-\mu_{n,y|z}(Z_{i})\right)\left(X_{i}-\mu_{n,x|z}(Z_{i})\right)- \mathbb{P}_{n} \Psi_{n,z}\right ] h^{c}_{n}(Z_{i}), 
   \end{align*}
   where  $\bar{h}_{n}=\frac{1}{n}\sum_{i}^{n}h(Z_{i})$, $h^{c}_{n}(Z)= h(Z)-\bar{h}_{n}$, $ \mathbb{P}_{n} \Psi_{n,z}= n^{-1}\sum_{j=1}^{n}\left(Y_{j}-\mu_{n,y|z}(Z_{i})\right)\left(X_{j}-\hat{\mu}_{0,x|z}(Z_{i})\right)$, $\mu_{0,y|z}=\E[Y|Z=z]$, $\mu_{0,x|z}=\E[X|Z=z]$ and their corresponding estimators as $\mu_{n,y|z}$ and $\mu_{n,x|z}$, respectively. 

\subsection{Simulation Results for Example 3}

\begin{figure}[t]
    \centering
    \includegraphics[width=0.90\textwidth]{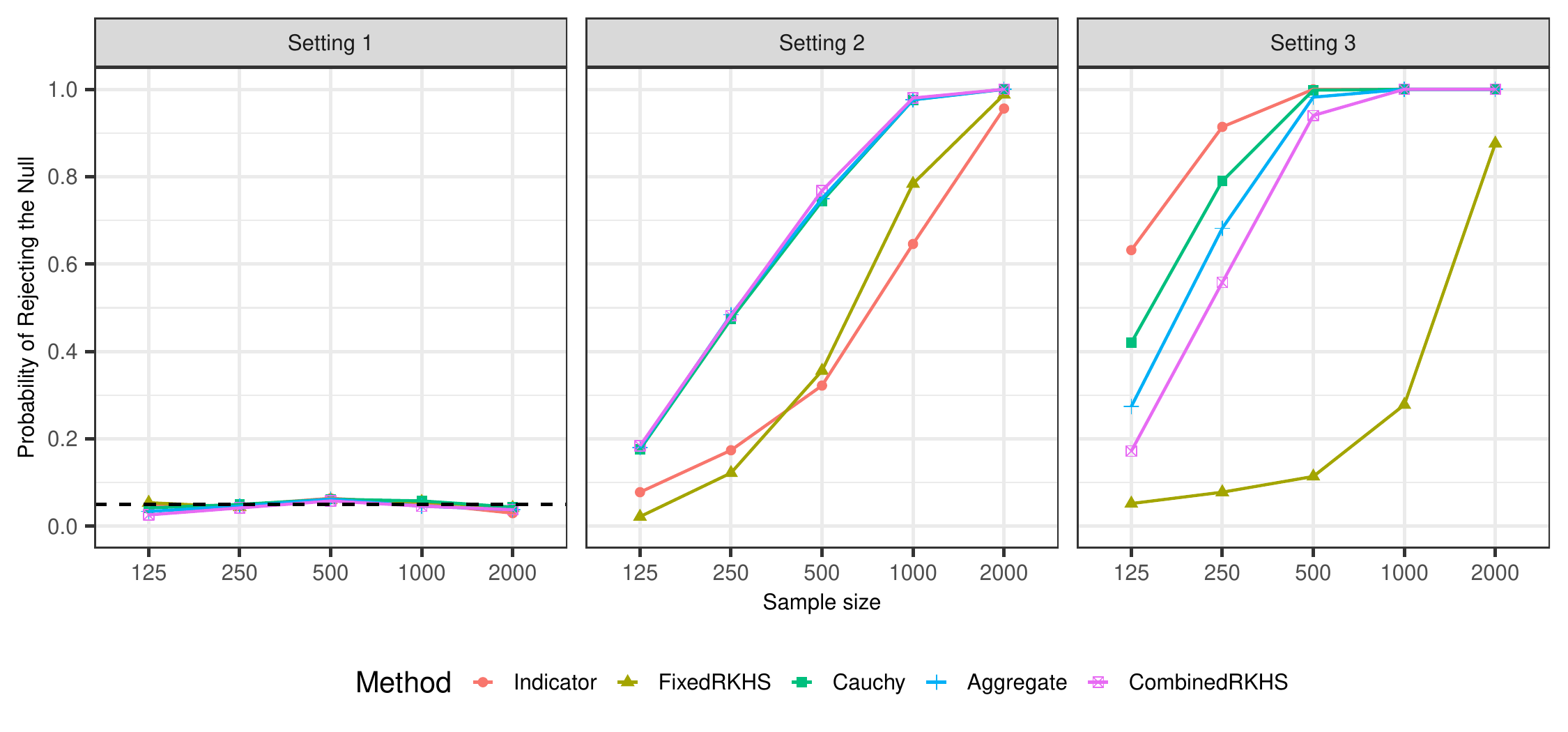}
    \caption{Empirical probability of rejection for hypotheses in Example 3 under different data generating mechanisms and sample sizes. The horizontal dashed line indicates the nominal 0.05 significance level.}
\label{fig3}
\end{figure}

We study the finite-sample performance of our proposed test for constancy of the conditional covariance function. In particular, we consider three settings to evaluate both type-I error control and power. For all the settings we implement our test using the same function classes presented in Section \ref{aggregatetest}. We also estimate the nuisance functions $\mu_{0,y|z}$ and $\mu_{0,x|z}$ using generalized additive models \citep{hastie1990generalized} with natural cubic spline components, implemented in the R package \texttt{mgcv}. 

\begin{itemize}
    \item \textbf{Setting 1:} $(Y,X) \sim N \left( \begin{psmallmatrix}
0 \\
0
\end{psmallmatrix}, \begin{psmallmatrix}
1 & 0 \\
0 & 1 
\end{psmallmatrix} \right)$ and $Z\sim \text{Unif}(-1,1)$.
    
    \item \textbf{Setting 2:} $Z \sim \text{Unif}(0,1)$ and $(Y,X \mid Z=z) \sim N\left( \begin{psmallmatrix}
0 \\
0
\end{psmallmatrix}, \begin{psmallmatrix}
1 & \rho(z) \\
\rho(z) & 1 
\end{psmallmatrix} \right)$, where $\rho(z)=\frac{e^{z^2}-1}{e^{z^2}+1}$.
    
    \item \textbf{Setting 3:} $Z \sim \text{Unif}(0,1)$, $X \sim N(0,1)$, and $Y = 0.5X\mathbbm{1}(Z > 0) + \varepsilon$, where $\varepsilon \sim N(0,1)$.
\end{itemize}

\noindent The results are presented in Figure~\ref{fig3}. Under Setting~1, where the null hypothesis of constant conditional covariance holds, all versions of the proposed test control the type-I error well as the sample size increases, with only minor fluctuations around the nominal level of 0.05. Under Settings~2 and 3, where the conditional covariance varies with $Z$, we observe increasing power as the sample size grows across all choices of function classes. The relative performance across function classes reflects how well each class captures the underlying structure of the conditional covariance. In Setting~2, the conditional covariance function is nonlinear in $Z$. As a result, tests constructed using more flexible function classes achieve higher power compared to simpler classes, such as the thresholded indicator class, which cannot adequately approximate the nonlinear structure. In contrast, Setting~3 exhibits a simple conditional relationship driven by an indicator function in $Z$ which is monotone. In this case, the test based on the thresholded indicator function class performs better in smaller samples, as this class aligns closely with the true underlying structure. Despite these differences, the aggregated test demonstrates consistently strong performance across all settings, highlighting the advantage of combining multiple function classes to adapt to a range of alternative structures.

\subsection{Example 4 \normalfont{(Assessing treatment effect heterogeneity in survival setting)}}
\setcustomref{examp4}{4}  % Manually set reference to 3
n this example, we describe how our proposed framework can be applied to assess effect heterogeneity in a setting with right-censored time-to-event outcomes. Recall in Example~\ref{examp2}, we applied our test to assess treatment effect heterogeneity in settings with binary or continuous outcome. Handling time-to-event data can be challenging due to censoring; therefore, we discuss this setting here as an additional example. Consider the ideal unobservable data vector $O_{F}=(W,A,T^{(0)},T^{(1)},C^{(1)},C^{(0)})$. Here, $W\in \mathcal{W} \subset \mathbb{R}^{p}$ is a vector of baseline covariate, $A \in \{0,1\}$ is the binary treatment variable, and for $a\in \{0,1\}$, $T^{(a)} \in (0,\infty]$ and $C^{(a)} \in (0,\infty]$ denote the potential event time and censoring time under treatment assignment $A=a$. We do not observe the full data vector $O_F$, but instead we observe $O=(W,A,Y,\Delta) \sim P_{0}$, where $T$ and $C$ are the observed event and censoring times, $Y=\text{min}\{ T,C\}$, and $\Delta= \mathbbm{1}\{T\le C \}$. For $s \subset \{1,\ldots,p\}$, let $W_s$ be the sub-vector containing elements of $W$ with indices in $s$.  Consider $t \in (0,\tau]$ for some positive $\tau < \infty$, we are interested in inference about the counterfactual conditional survival curve
\begin{align*}
     \Psi_{0,w_{s}}&=P_{0}(T^{(1)}>t|W=w)-P_{0}(T^{(0)}>t|W_{s}=w_{s}).
\end{align*}
To assess treatment effect heterogeneity under the time-to-event outcome, we test the null hypothesis:
\begin{align} \label{Hetcondsurv}
   H_{0}:  \Psi_{0,w_{s}} =\theta_{0} \quad P_{0,W_{s}}\text{-almost surely}
\end{align}
where $\theta_{0}= P_{0}(T^{(1)}>t)-P_{0}(T^{(0)}>t)$. We also see here that the above null hypothesis fall within our general class of hypotheses. Under the assumptions stated \citep{westling2024inference}, we can identify $\Psi_{0,w_{s}}$  in terms of the distribution $P_{0}$ of the observed data unit. Let $S_{0}(t|A=a,W=w)=P_{0}(T>t|A=a,W=w)$ denote the conditional survival function corresponding to treatment $A=a$, then the $\Psi_{0,w_{s}}$ is identified as
\begin{align*}
    \Psi_{0,w_{s}}&=\E[S_{0}(t|A=1,W)-S_{0}(t|A=0,W)|W_{s}=w_{s} ].
\end{align*}
To construct a test for \eqref{Hetcondsurv} using our proposed framework, we first have to derive $D^{*}_{h,0}(o)$, the the EIF of $\Omega_{P_{0}}(h)=\int  \Psi_{0,w_{s}} (w_{s})h_{P_{0,W_{s}}}^{c}(w_{s})dP_{0,W_{s}}(w_{s})$. Recall that the general form of $D^{*}_{h,0}$ depends on $D^{w}_{P_{0}}$, the conditional influence function of $\Psi_{0,w_{s}}$. In this example we again show that deriving $D^{*}_{h,0}$ can be streamlined. Let $\eta_{0,w}=S_{0}(t|A=1,W=w)-S_{0}(t|A=0,W=w)$, then $\Psi_{0,w_{s}}=\E[\eta_{0,w}|W_{s}=w_{s}]$. Following similar argument in \cite{westling2024inference}, the conditional influence function of $\eta_{0,w}$ is 
\begin{align*}
  \sum_{a_{0}=0}^{1}  -(-1)^{a_{0}} S_{0}(t|a_{0},w) \Big [1- \frac{\mathbbm{1}\{a=a_{0}\}}{\pi_{0}(a_{0}|w)}  \Big\{\frac{\mathbbm{1}\{y \le t,\delta=1\}}{S_{0}(u|a_{0},w)G_{0}(u|a_{0},w)}-\int_{0}^{t\wedge y}\frac{\Lambda_{0}(du|a,w)}{S_{0}(u|a_{0},w)G_{0}(u|a_{0},w)} \Big\} \Big ], 
\end{align*}

\noindent where $G_{0}(u|a,w)\coloneqq   \prodi_{(0,u]} \{1-H_{0}(ds|a,w)\}$, $H(u|a,w)=\int_{[0,u] } \Big \{\frac{S_{0}(u-|a,w)}{S_{0}(s|a,w)} \frac{F_{0.0}(ds|a,w)}{R_{0}(s|a,w)}\Big \}$,  $F_{0.0}(u|a,w)=P_{0}(Y\le u,\Delta=0 |A=a,W=w)$, $F_{0,1}(t|a,w)=P_{0}(Y \le t,\Delta=1|A=a,W=w)$, $R_{0}(t|a,w)=P_{0}(Y\ge t|A=a,W=w)$, $S_{0}(t|a,W)\coloneqq \prodi_{(0,t]}\{1-\Lambda_{0}(du|a,w) \}$, $\Lambda_{0}(t|a,w) =\int_{0}^{t}F_{0,1}(du|a,w)/R_{0}(u|a,w)$, and $\prodi$ denotes the Riemann-Stieltjes product integral. We then obtain $D^{*}_{h,0}$ as
\begin{align*}
     &\Bigg \{ \sum_{a_{0}=0}^{1}  -(-1)^{a_{0}} S_{0}(t|a_{0},w) \Big [1- \frac{\mathbbm{1}\{a=a_{0}\}}{\pi_{0}(a_{0}|w)}  \Big\{\frac{\mathbbm{1}\{y \le t,\delta=1\}}{S_{0}(u|a_{0},w)G_{0}(u|a_{0},w)} 
      -\int_{0}^{t\wedge y}\frac{\Lambda_{0}(du|a,w)}{S_{0}(u|a_{0},w)G_{0}(u|a_{0},w)} \Big\} \Big ] \\ 
      & - \int  \Psi_{0,w_{s}} (w_{s})dP_{0,W_{s}}(w_{s}) \Bigg \} h_{P_{0,W_{s}}}^{c}(w_{s}) - \Omega_{P_{0}}(h).
\end{align*}

\section{ Proofs of theoretical results} \label{sec:proofs}

\subsection{Proof of Theorem~\ref{theorem1}}
\begin{proof}
We show that $\Omega_{P_{0}}(h)$ is pathwise differentiable and present the form of its EIF, $D^{*}_{h,0}$. That is, for a path $dP_t=(1+ts)dP_0$ through $P_{0} \in \mathcal{M}$ in the direction $s$ where $\int s(o)dP_{0}(o)=0$ and $\int[s^2(o)]dP_{0}(o)<\infty$, there exist some $ D^*_{h,0}(o)$ such that for a fixed $h$,
 \begin{align*}
       \frac{\partial}{\partial t} \Omega_{P_{t}}(h) \Bigg|_{t=0} = \int  D^*_{h,0}(o) s(o) dP_{0}(o),
 \end{align*}
 where  $D^*_{h,0}(o)$ is uniformly bounded satisfying $ \int D^*_{h,0}(o)dP_{0}(o)=0$ and $\int (D^*_{h,0}(o))^{2}dP_{0}(o)<\infty$.

 Recall that $\Omega_{P_{t}}(h)= \int  \Psi_{t,v}(v)h^{c}_{P_{t,V}}(v) dP_{t,V}(v)$ where $h^{c}_{P_{t,v}}(v)=h(v) -\int h(v)dP_{t,V}(v)$. Now suppose conditions \hyperref[C1]{\rm{C1}} and \hyperref[C2]{\rm{C2}} are satisfied, then 

\begin{align*}
    \frac{\partial}{\partial t} \Omega_{P_{t}}(h) \Bigg|_{t=0} &=\ \frac{\partial}{\partial t} \int  \Psi_{t,v}(v)h^{c}_{P_{t,V}}(v) dP_{t,V}(v)\Big|_{t=0} \\
    &= \int  \frac{\partial}{\partial t} \Psi_{t,v}(v)\Big|_{t=0} h^{c}_{P_{0,V}}(v) dP_{0,V}(v) + \int  \Psi_{0,v}(v)  \frac{\partial}{\partial t}h^{c}_{P_{t,V}}(v) \Big|_{t=0} dP_{0,V}(v)\\
    &\quad+ \int  \Psi_{0,v}(v)h^{c}_{P_{0,V}}(v) \frac{\partial}{\partial t}dP_{t,V}(v) \Big|_{t=0}.
\end{align*}

Under the full-data path $dP_t=(1+ts)dP_0$, the induced score for the conditional distribution $P_{t,O|V=v}$ is 
\[
s_v(o)=s(o)-E_{0}[s(O)\mid V=v].
\]
By Condition \hyperref[C2]{\rm{C2}},
\begin{align*}
\frac{\partial}{\partial t}\Psi_{t,v}(v)\Bigg|_{t=0}
&= \int D^{v}_{P_0}(o)s_v(o)dP_{0,O|v}(o) \\
&= \int D^{v}_{P_0}(o)\{s(o)-E_0[s(O)\mid V=v]\}dP_{0,O|v}(o).
\end{align*}
Since $\int D^{v}_{P_0}(o)dP_{0,O|v}(o)=0$, this simplifies to
\[
\frac{\partial}{\partial t}\Psi_{t,v}(v)\Bigg|_{t=0}
= \int D^{v}_{P_0}(o)s(o)dP_{0,O|v}(o).
\]
Therefore,
\begin{align*}
\int  \frac{\partial}{\partial t} \Psi_{t,v}(v)\Bigg|_{t=0} h^{c}_{P_{0,V}}(v) dP_{0,V}(v)
= \int D^{v}_{P_0}(o) h^{c}_{P_{0,V}}(v) s(o) dP_{0}(o).
\end{align*}

Next, since $h^{c}_{P_{t,V}}(v)=h(v)-\int h(v)dP_{t,V}(v)$,
\begin{align*}
\frac{\partial}{\partial t}h^{c}_{P_{t,V}}(v)\Big|_{t=0}
&= -\frac{\partial}{\partial t}\int h(v)dP_{t,V}(v)\Big|_{t=0} \\
&= -\int h(v)s(o)dP_{0}(o).
\end{align*}
Since $\int s(o)dP_0(o)=0$, we may replace $h(v)$ by $h^{c}_{P_{0,V}}(v)$, giving
\[
\frac{\partial}{\partial t}h^{c}_{P_{t,V}}(v)\Big|_{t=0}
= -\int h^{c}_{P_{0,V}}(v)s(o)dP_{0}(o).
\]
Hence,
\begin{align*}
\int  \Psi_{0,v}(v)  \frac{\partial}{\partial t}h^{c}_{P_{t,V}}(v) \Bigg|_{t=0} dP_{0,V}(v)
= - \int \Psi_{0,v}(v)dP_{0,V}(v) \int h^{c}_{P_{0,V}}(v)s(o)dP_{0}(o).
\end{align*}

Finally, for the third term,

\begin{align*}
\int  \Psi_{0,v}(v)h^{c}_{P_{0,V}}(v) \frac{\partial}{\partial t}dP_{t,V}(v) \Bigg|_{t=0}
= \int \Psi_{0,v}(v)h^{c}_{P_{0,V}}(v)s(o)dP_{0}(o).
\end{align*}

Combining all terms,
\begin{align*}
\frac{\partial}{\partial t} \Omega_{P_{t}}(h) \Bigg|_{t=0}
&= \int \Bigg(\left[D^{v}_{P_{0}}(o) -  \int \Psi_{0,v}(v)dP_{0,V}(v)  + \Psi_{0,v}(v)\right]h^{c}_{P_{0,V}}(v)\Bigg) s(o)dP_{0}(o).
\end{align*}

Thus, the efficient influence function is
\begin{align*}
D^*_{h,0}(o)= \Bigg (D^{v}_{P_{0}}(o) -\int\Psi_{0,v}(v)dP_{0,V}(v) +   \Psi_{0,v}(v) \Bigg)
h^{c}_{P_{0,V}}(v)-  \Omega_{P_{0}}(h).
\end{align*}
This completes the proof.
\end{proof}

\subsection{Proof of Proposition~\ref{prop}}
\begin{proof}
We want to show that $D^*_{h,0}$ has an explicit form when $\Psi_{0,v}$ is a continuous linear functional of the conditional mean function. We first obtain the  conditional influence function \(D^v_{P_0}\) for \(\Psi_{0,v}\), and then substitute it into the general form of the  EIF, $D^*_{h,0}$.

Fix \(v\), and consider a path through the conditional law \(P_{0,O\mid v}\) in the direction \(s_v\), given by
\[
dP_{t,O\mid v}=(1+t s_v)dP_{0,O\mid v},
\]
where \(P_{0,O\mid v}=P_{t=0,O\mid v}\), and \(s_v\) satisfies
\[
\int s_v(o)\,dP_{0,O\mid v}(o)=0,
\qquad
\int s_v^2(o)\,dP_{0,O\mid v}(o)<\infty.
\]

Differentiating with respect to \(t\) at \(t=0\), we obtain
\begin{align*}
\frac{\partial}{\partial t}\Psi_{t,v}(v)\Bigg|_{t=0}
&=
\frac{\partial}{\partial t}\int m(o,\mu_t(u_{2},v))\,dP_{t,O\mid v}(o)\Big|_{t=0}\\
&=
\int m(o,\mu_0(u_{2},v))
\frac{\partial}{\partial t}dP_{t,O\mid v}(o)\Big|_{t=0}
+
\frac{\partial}{\partial t}\E[m(O,\mu_t(u_{2},v))\mid V=v]\Big|_{t=0}.
\end{align*}

Since
\[
dP_{t,O\mid v}=(1+t s_v)dP_{0,O\mid v},
\]
it follows that
\[
\frac{\partial}{\partial t}dP_{t,O\mid v}(o)\Big|_{t=0}
=
s_v(o)\,dP_{0,O\mid v}(o).
\]
Hence,
\begin{align*}
\int m(o,\mu_0(u_{2},v))
\frac{\partial}{\partial t}dP_{t,O\mid v}(o)\Bigg|_{t=0}
&=
\int m(o,\mu_0(u_{2},v))s_v(o)\,dP_{0,O\mid v}(o)\\
&=
\int \{m(o,\mu_0(u_{2},v))-\Psi_{0,v}(v)\}s_v(o)\,dP_{0,O\mid v}(o),
\end{align*}
where the second equality uses \(\int s_v(o)\,dP_{0,O\mid v}(o)=0\). For the second term,

\begin{align*}
\frac{\partial}{\partial t}\E[m(O,\mu_t(U_{2},V))\mid V=v]\Bigg|_{t=0}
&=
\E\!\left[
\alpha_0(U_{2},V)\E[(U_{1}-\mu_0(U_{2},V))s_v(O)\mid U_{2},V]
\mid V
\right]\\
&=
\E[\alpha_0(U_{2},V)(U_{1}-\mu_0(U_{2},V))s_v(O)\mid V]\\
&=
\int \alpha_0(u_{2},v)\{u_{1}-\mu_0(u_{2},v)\}s_v(o)\,dP_{0,O\mid v}(o).
\end{align*}

Combining the two terms, we obtain
\begin{align*}
\frac{\partial}{\partial t}\Psi_{t,v}(v)\Bigg|_{t=0}
=
\int
\Big\{
m(o,\mu_0(u_{2},v))
-\Psi_{0,v}(v)
+\alpha_0(u_{2},v)\big(u_{1}-\mu_0(u_{2},v)\big)
\Big\}
s_v(o)\,dP_{0,O\mid v}(o).
\end{align*}

Thus, the conditional influence function is
\[
D^v_{P_0}(o)
=
m(o,\mu_0(u_{2},v))
-\Psi_{0,v}(v)
+\alpha_0(u_{2},v)\big(u_{1}-\mu_0(u_{2},v)\big).
\]

Moreover,
\begin{align*}
\E[D^v_{P_0}(O)\mid V]
&=
\E[m(O,\mu_0(U_{2},V))\mid V]-\Psi_{0,v}(v)+
\E[\alpha_0(U_{2},V)\{U_{1}-\mu_0(U_{2},V)\}]\\
&=0,
\end{align*}
since \(\Psi_{0,v}(v)=\E[m(O,\mu_0(U_{2},V))]\) and
\[
\E[U_{1}-\mu_0(U_{2},V)\mid U_{2}, V]=0.
\]
Hence, \(D^v_{P_0}\) satisfies Condition~\hyperref[C2]{\rm C2}.

Replacing \(D^v_{P_0}\) in the general EIF formula, we obtain
\begin{align*}
D^*_{h,0}(o)
&=
\left(
m(o,\mu_0(u_{2},v))
+\alpha_0(u_{2},v)\big(u-\mu_0(u_{2},v)\big)
-\int \Psi_{0,v}(v)dP_{0,V}(v)
\right)
h^c_{P_{0,V}}(v)
-\Omega_{P_0}(h).
\end{align*}
This completes the proof.
\end{proof}

\subsection{Extension of Proposition ~\ref{prop}}
In this section, we provide a brief discussion on how the conclusion of Proposition~\ref{prop} can be extended beyond continuous linear functional of the conditional mean function.  Suppose the function-valued parameter is of the form
\[
\Psi_{0,v} = \E[m(O,\eta_{0,v}) \mid V=v],
\]
where $m(o,\eta_{0,v})$ is a pathwise differentiable function of some conditional quantity $\eta_{0,v}$. If the conditional influence function of $\eta_{0,v}$ satisfying condition \hyperref[C2]{\rm{C2}} exists, we can obtain the form of $D^v_{P_0}$, and hence the close form of the EIF, $D^*_{h,0}$.  We denote the  conditional influence function of $\eta_{0,v}$ by $D^{\eta_{0,v}}_{P_0}$. The form of the conditional influence function of $\Psi_{0,v}$ is given by
\[
D^v_{P_0}(o)
=
m(o,\eta_{0,v})
+
m'(o,\eta_{0,v}) D^{\eta_{0,v}}_{P_0}
-
\Psi_{0,v}(v),
\]
where $m'(o,\eta_{0,v})$ denotes the pathwise derivative of $m(O,\eta_{0,v})$ with respect to $\eta_{0,v}$. Substituting this expression into the EIF of $\Omega_{P_0}(h)$ yields
\[
D^*_{h,0}(o)
=
\left\{
m(o,\eta_{0,v})
+
m'(o,\eta_{0,v}) D^{\eta_{0,v}}_{P_0}(o)
-
\int \Psi_{0,v}(v)\, dP_{0,V}(v)
\right\}
h^c_{P_{0,V}}(v)
-
\Omega_{P_0}(h).
\]
This results applies to a broader class that that of the continuous linear functional of the conditional mean function. For example, we can straightforwardly apply our framework to function-valued parameters where $\eta_{0,v}$ is the conditional survival function since the form of the  conditional influence function of the $\eta_{0,v}$ is known in the literature.

\subsection{Proof of Proposition~\ref{prop2}}

\begin{proof}

Recall that
\begin{align*}
\Psi_{0,v}(v)=
\int g(o;P_{0,O\mid v}) dP_{0,O|v}(o) , \text{ and } \widetilde{\Psi}_{0} =
\int \Psi_{0,v}(v)\,dP_{t,V}(v).
\end{align*}

Take a path through \(P_0\) in the direction \(s\), given by
\begin{align*}
dP_t=(1+t s)\,dP_0.
\end{align*}

Differentiating at \(t=0\), we obtain
\begin{align*}
\frac{\partial}{\partial t}\widetilde{\Psi}_t\Bigg|_{t=0}
&=
\frac{\partial}{\partial t}\int \Psi_{t,v}(v)\,dP_{t,V}(v)\Bigg|_{t=0}\\
&=
\int \frac{\partial}{\partial t}\Psi_{t,v}(v)\Bigg|_{t=0}\,dP_{0,V}(v)
+
\int \Psi_{0,v}(v)\frac{\partial}{\partial t}dP_{t,V}(v)\Bigg|_{t=0}.
\end{align*}

By Condition~\hyperref[C2]{\rm C2}, for each fixed \(v\),
\begin{align*}
\frac{\partial}{\partial t}\Psi_{t,v}(v)\Bigg|_{t=0}
&=
\int D^{v}_{P_0}(o)s_v(o)\,dP_{0,O\mid v}(o),
\end{align*}
where
\begin{align*}
s_v(o)=s(o)-E_0[s(O)\mid V=v].
\end{align*}

It follows that
\begin{align*}
\frac{\partial}{\partial t}\Psi_{t,v}(v)\Bigg|_{t=0}
&=
\int D^{v}_{P_0}(o)\{s(o)-E_0[s(O)\mid V=v]\}\,dP_{0,O\mid v}(o)\\
&=
\int D^{v}_{P_0}(o)s(o)\,dP_{0,O\mid v}(o).
\end{align*}
Therefore,
\begin{align*}
\int \frac{\partial}{\partial t}\Psi_{t,v}(v)\Bigg|_{t=0}\,dP_{0,V}(v)
&=
\int \left\{\int D^{v}_{P_0}(o)s(o)\,dP_{0,O\mid v}(o)\right\}dP_{0,V}(v)\\
&=
\int D^{v}_{P_0}(o)s(o)\,dP_0(o).
\end{align*}

\begin{align*}
\int \Psi_{0,v}(v)\frac{\partial}{\partial t}dP_{t,v}(v)\Bigg|_{t=0}
&=
\int \Psi_{0,v}(v)E_0[s(o)\mid V=v]\,dP_{0,V}(v)\\
&=
\int \Psi_{0,v}(v)s(o)dP_{0}(o)\\
&=
\int (\Psi_{0,v}(v)-\widetilde{\Psi}_0)s(o) dP_{0}(o).
\end{align*}

Combining the two terms, we obtain
\begin{align*}
\frac{\partial}{\partial t}\widetilde{\Psi}_t\Bigg|_{t=0}
&=
\int D^{v}_{P_0}(o)s(o)\,dP_0(o)
+
\int (\Psi_{0,v}(v)-\widetilde{\Psi}_0)s(o)\,dP_0(o)\\
&=
\int \Big(D^{v}_{P_0}(o)+\Psi_{0,v}(v)-\widetilde{\Psi}_0\Big)s(o)\,dP_0(o).
\end{align*}
The EIF of $\widetilde{\Psi}_{0}$ is 
\begin{align*}
\widetilde{D}_{P_0}(o)
=
D^{v}_{P_0}(o)
+
\Psi_{0,v}(v)
-
\widetilde{\Psi}_0.
\end{align*}
Equivalently,
\begin{align*}
D^{v}_{P_0}(o)
=
\widetilde{D}_{P_0}(o)
-
\Psi_{0,v}(v)
+
\widetilde{\Psi}_0.
\end{align*}

Substituting this into the general form of $D^*_{h,0}$ gives
\begin{align*}
D^*_{h,0}(o)
&=
\Big(
\widetilde{D}_{P_0}(o)
-
\Psi_{0,v}(v)
+
\widetilde{\Psi}_0
-
\widetilde{\Psi}_0
+
\Psi_{0,v}(v)
\Big)
h^c_{P_{0,V}}(v)
-
\Omega_{P_0}(h)\\
&=
\widetilde{D}_{P_0}(o)\,h^c_{P_{0,V}}(v)
-
\Omega_{P_0}(h).
\end{align*}
This completes the proof.
\end{proof}

\subsection{Proof of Theorem~\ref{theorem2}}
\begin{proof}
Recall that the pathwise differentiability of $\Omega_{P_{0}}(h)$ informs the following von Mises expansion 
\begin{align*}
     \Omega_{\widehat{P}_n}(h)- \Omega_{P_{0}}(h)=- P_0 D^*_{h,n} + R_{h}(\widehat{P}_{n},P_{0}) ,
\end{align*}
where $\widehat{P}_n$ is any estimator for $P_{0}$ and $R_{h}(\widehat{P}_{n},P_{0}) \coloneqq \Omega_{\widehat{P}_n}(h)- \Omega_{P_{0}}(h) + P_{0}D^{*}_{h,n} $ is a second order reminder term. We can then express our one-step estimator, $\widehat{\Omega}^{os}_{n}(h)$, as 
\begin{align*}
    \widehat{\Omega}^{os}_{n}(h)- \Omega_{P_{0}}(h)& = (\mathbb{P}_n-P_{0})D^*_{h,0} + (\mathbb{P}_n-P_{0})(D^*_{h,n}-D^*_{h,0}) + R_{h}(\widehat{P}_{n},P_{0}) \\
    & =  \mathbb{P}_nD^*_{h,0} + r_n(h),
\end{align*}
where $r_n(h)\coloneqq (\mathbb{P}_n-P_{0})(D^*_{h,n}-D^*_{h,0}) + R_{h}(\widehat{P}_{n},P_{0})$. Under the condition that $\underset{h \in\mathcal{H}}{\text{sup }} |r_n(h)|=o_{P}(n^{-1/2})$, uniform asymptotic linearity follow immediately. Furthermore, under the $P_0$-Donsker condition on $\mathcal{H}$, the empirical process
\{$\sqrt{n}(\mathbb{P}_n-P_0)D^*_{h,0}, h\in \mathcal{H}\}$ converges weakly to a tight mean-zero  Gaussian process $\mathbb{G}$ in $\ell^\infty(\mathcal{H})$, and weak convergence of 
$\{\sqrt{n}( \widehat{\Omega}^{os}_{n}(h)- \Omega_{P_{0}}(h)),h\in \mathcal{H}\}$ follows as a consequence (see for e.g., Chapter 2 of \cite{van1996weak}).
\end{proof}

\subsection{Discussion on the Donsker condition}
The Donsker conditions on $\mathcal{H}$ and $\{D^*_{h,0}, D^*_{h,n}: h \in \mathcal{H}\}$ in Theorem~\ref{theorem2} play distinct roles in establishing uniform convergence of $\{\sqrt{n}(\mathbb{P}_n - P_0)D^*_{h,0}: h \in \mathcal{H}\}$. We impose a Donsker condition on $\mathcal{H}$ to control the supremum norm, and a Donsker condition on $\{D^*_{h,0}, D^*_{h,n}: h \in \mathcal{H}\}$ to control the empirical process term $ (\mathbb{P}_n - P_0)(D^*_{h,n} - D^*_{h,0})$. In fact, a $P_0$-Donsker condition on $\mathcal{H}$ may be sufficient to imply that the class $\{D^*_{h,0}, D^*_{h,n}: h \in \mathcal{H}\}$ is also $P_0$-Donsker under a suitable Lipschitz-type condition, which can be verified directly in a given application. See Lemma~S1 in the Supplementary Material of \cite{hudson2026inference} for details. Thus, it may not be necessary to impose separate Donsker assumptions on the induced classes $\{D^*_{h,0}: h \in \mathcal{H}\}$ and $\{D^*_{h,n}: h \in \mathcal{H}\}$.

\section{Details on the illustrative examples}\label{sec:details-examples}
In this section, we provide detailed analysis of our proposed one-step estimator in Examples~\ref{examp1}-\ref{examp3} and show conditions under which they are uniformly asymptotically linear. Recall that we can express our one-step estimator, $\widehat{\Omega}^{os}_{n}(h)$, as 
\begin{align*}
    \widehat{\Omega}^{os}_{n}(h)- \Omega_{P_{0}}(h)
    & =  \mathbb{P}_nD^*_{h,0} + r_n(h)
\end{align*}
where $r_n(h)\coloneqq (\mathbb{P}_n-P_{0})(D^*_{h,n}-D^*_{h,0}) + R_{h}(\widehat{P}_{n},P_{0})$ and $ R_{h}(\widehat{P}_{n},P_{0})$  is defined in \eqref{secondorder}. For $\widehat{\Omega}^{os}_{n}(h)$ to be uniformly asymptotically linear, we require that $\underset{h \in \mathcal{H}}{\text{sup }} |r_n(h)|=o_{P}(n^{-1/2})$. Sections~\ref{illustration} and \ref{sec:additional-examples} present the form of $\widehat{\Omega}^{os}_{n}(h)$ for Examples~\ref{examp1}--\ref{examp3}. In this section, we describe conditions under which $\underset{h \in \mathcal{H}}{\text{sup }} |r_n(h)|=o_{P}(n^{-1/2})$ hold. Specifically, we will show that  $\underset{h \in \mathcal{H}}{\text{sup }}|(\mathbb{P}_n-P_{0})(D^*_{h,n}-D^*_{h,0})|=o_{P}(n^{-1/2})$ and $\underset{h \in \mathcal{H}}{\text{sup }} |R_{h}(\widehat{P}_{n},P_{0})|=o_{P}(n^{-1/2})$. To justify  $\underset{h \in \mathcal{H}}{\text{sup }}|(\mathbb{P}_n-P_{0})(D^*_{h,n}-D^*_{h,0})|=o_{P}(n^{-1/2})$, it is sufficient to show that $\underset{h \in \mathcal{H}}{\text{sup }} |P_{0}[D^*_{h,n}-D^*_{h,0}]^{2}|=o_p(1)$ in addition to assumption that $D^*_{h,n}$ and $D^*_{h,0}$ belong to some $P_0$-Donsker class (see Lemma~\ref{lemma1}). 

\begin{lemma}\label{lemma1}
Let $\mathbb{G}_n f=\sqrt{n}(\mathbb{P}_n-P_{0})f$. Suppose $\mathcal F$ is a fixed $P_{0}$-Donsker class such that, with probability tending to one,
\begin{align*}
    \{f_{n,h}:h\in\mathcal H\}\cup\{f_{0,h}:h\in\mathcal H\}\subset \mathcal F, \text{ and } \sup_{h\in\mathcal H}\|f_{n,h}-f_{0,h}\|_{L^{2}(P_{0})}=o_p(1), 
\end{align*}
then
\begin{align*}
\sup_{h\in\mathcal H}\left|\mathbb G_n (f_{n,h}- f_{0,h})\right|=o_p(1).
\end{align*}
\end{lemma}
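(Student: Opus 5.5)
The argument rests on asymptotic equicontinuity of the empirical process indexed by the fixed $P_0$-Donsker class $\mathcal F$. Donsker classes are exactly the classes for which $\mathbb G_n$ is asymptotically uniformly equicontinuous with respect to a semimetric; we will use the $L^2(P_0)$ semimetric. Since $\mathcal F$ is $P_0$-Donsker, it is totally bounded in the variance semimetric $\rho_{P_0}(f,g)=\{\mathrm{Var}_{P_0}(f-g)\}^{1/2}\le \|f-g\|_{L^2(P_0)}$. For every $\epsilon,\eta>0$ there is $\delta>0$ such that
\[
\limsup_{n\to\infty} \Pr^*\Big(\sup_{f,g\in\mathcal F,\ \rho_{P_0}(f,g)<\delta}|\mathbb G_n(f-g)|>\epsilon\Big)<\eta,
\]
by, e.g., Theorem 1.5.7 and Section 2.1 of \cite{van1996weak}. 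The proof then reduces the uniform-in-$h$ statement to this single modulus-of-continuity bound.

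Fix $\epsilon,\eta>0$ and choose $\delta$ as above. Define the event
\[
A_n=\Big\{\{f_{n,h},f_{0,h}:h\in\mathcal H\}\subset\mathcal F\Big\}\cap\Big\{\sup_{h\in\mathcal H}\|f_{n,h}-f_{0,h}\|_{L^2(P_0)}<\delta\Big\}.
\]
By the two hypotheses of the lemma, $\Pr(A_n^c)\to 0$. On $A_n$, each pair $(f_{n,h},f_{0,h})$ lies in $\mathcal F$ with $\rho_{P_0}(f_{n,h},f_{0,h})<\delta$. Hence
\[
\sup_{h\in\mathcal H}|\mathbb G_n(f_{n,h}-f_{0,h})|\le \sup_{f,g\in\mathcal F,\ \rho_{P_0}(f,g)<\delta}|\mathbb G_n(f-g)|.
\]
It follows that
\[
\limsup_n \Pr^*\Big(\sup_h|\mathbb G_n(f_{n,h}-f_{0,h})|>\epsilon\Big)\le \eta+\limsup_n\Pr(A_n^c)=\eta.
\]
Since $\eta$ is arbitrary, this gives the stated $o_p(1)$ conclusion.

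The main obstacle is that $f_{n,h}$ is random and depends on the same data that define $\mathbb G_n$, so $\mathbb G_n$ cannot be applied to it as if it were a fixed function. The device above handles this by bounding the quantity by a supremum over a deterministic class, which is valid pointwise in $\omega$ on $A_n$. The remaining care is with measurability: all statements are made in outer probability, as is standard for Donsker theory. One should also note that the variance semimetric is dominated by the $L^2(P_0)$ norm, so the hypothesis stated in $L^2(P_0)$ suffices.
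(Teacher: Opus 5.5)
Your proof is correct and follows the same route as the paper. Both invoke asymptotic uniform equicontinuity of $\mathbb G_n$ on the fixed Donsker class $\mathcal F$ (via \cite{van1996weak}), then split on the event $\sup_{h\in\mathcal H}\|f_{n,h}-f_{0,h}\|_{L^2(P_0)}\ge\delta$. Your version is slightly more careful than the paper's in two places it leaves implicit: you explicitly intersect with the event that $\{f_{n,h},f_{0,h}:h\in\mathcal H\}\subset\mathcal F$, and you justify using the $L^2(P_0)$ norm by passing through the variance semimetric, which that norm dominates.
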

\begin{proof}
Since $\mathcal F$ is $P_0$-Donsker, $\mathbb G_n$ converges weakly in $\ell^{\infty}(\mathcal F)$. Hence, by Theorems 1.5.4 and 1.5.7 of \cite{van1996weak}, the empirical process $\mathbb G_n$ is asymptotically uniformly $\rho$-equicontinuous on $\mathcal F$, where
\[
\rho(f,g)=\|f-g\|_{L^2(P_0)}.
\]
Thus, for any $\varepsilon,\eta>0$, there exists $\delta>0$ such that
\begin{align*}
\limsup_{n\to\infty}
P_0\Big(
\sup_{\substack{f,g\in\mathcal F\\ \|f-g\|_{L^2(P_0)}\le \delta}}
|\mathbb G_n f-\mathbb G_n g|>\varepsilon
\Big)
<\eta.
\end{align*}
Let $\Delta_n:=\underset{{h\in\mathcal H}}{\sup}\|f_{n,h}-f_{0,h}\|_{L^2(P_0)}$.
Then
\begin{align*}
&P_0\Big(
\sup_{h\in\mathcal H}|\mathbb G_n(f_{n,h}-f_{0,h})|>\varepsilon
\Big) =
P_0\Big(
\sup_{h\in\mathcal H}|\mathbb G_n f_{n,h}-\mathbb G_n f_{0,h}|>\varepsilon
\Big) \\
&\le
P_0(\Delta_n>\delta)
+
P_0\Big(
\sup_{\substack{f,g\in\mathcal F\\ \|f-g\|_{L^2(P_0)}\le \delta}}
|\mathbb G_n f-\mathbb G_n g|>\varepsilon
\Big).
\end{align*}
The first term converges to zero by assumption, and the second term is asymptotically smaller than $\eta$. Since $\eta>0$ is arbitrary, it follows that
\[
\sup_{h\in\mathcal H}\Big|\mathbb G_n(f_{n,h}-f_{0,h})\Big|=o_p(1).
\]
\end{proof}

\noindent Now recall that 
\begin{align}
    R_{h}(\widehat{P}_{n},P_{0}) \coloneqq \Omega_{\widehat{P}_n}(h)- \Omega_{P_{0}}(h) + P_{0}D^{*}_{h,n}.\label{secondorder}
\end{align}
In each of the examples, we study $R_{h}(\widehat{P}_{n},P_{0})$, and $P_{0}[D^*_{h,n}-D^*_{h,0}]^{2}$ and show conditions under which $\underset{h \in \mathcal{H}}{\text{sup }} \{P_{0}[D^*_{h,n}-D^*_{h,0}]^{2}\}^{1/2}=o_p(1)$ and $\underset{h \in \mathcal{H}}{\text{sup }} |R_{h}(\widehat{P}_{n},P_{0})|=o_{P}(n^{-1/2})$. 

To establish that $\underset{h \in\mathcal{H}}{\text{sup }} |r_n(h)|=o_{P}(n^{-1/2})$, it suffices to show that $\underset{h \in \mathcal{H}}{\text{sup }}|(\mathbb{P}_n-P_{0})(D^*_{h,n}-D^*_{h,0})|=o_{P}(n^{-1/2})$ and $\underset{h \in \mathcal{H}}{\text{sup }} |R_{h}(\widehat{P}_{n},P_{0})|=o_{P}(n^{-1/2})$. For Examples~\ref{examp1}--\ref{examp3}, we indicate conditions for which $\underset{h \in \mathcal{H}}{\text{sup }}|(\mathbb{P}_n-P_{0})(D^*_{h,n}-D^*_{h,0})|=o_{P}(n^{-1/2})$ and $\underset{h \in \mathcal{H}}{\text{sup }} |R_{h}(\widehat{P}_{n},P_{0})|=o_{P}(n^{-1/2})$ hold.

\subsection{Example 1 }
In this example, our function-valued parameter is  $\Psi_{0,x}=\E[Y|X=x]$. Denote $\bar{y}_{n}=n^{-1}\sum_{i=1}^nY_i$, $\bar{y}_{0}=\E[Y]$, $\bar{h}_{n}= \frac{1}{n} \sum_{i=1}^n h(X_i)$ and $\bar{h}_{0}=\E[h(X)]$. We recall that 
\begin{align*}
    \Omega_{P_{0}}(h)&=\int \E[Y|X=x](h(x)-\bar{h}_{0})dP_{0,X}(x) \\
  D^*_{h,0}(o) &= (y -\bar{y}_{0})
(h(x)-\bar{h}_{0})-  \Omega_{P_0}(h) \\
  D^*_{h,n}(o) &= (y -\bar{y}_{n})
(h(x)-\bar{h}_{n})-  \Omega_{\widehat{P}_n}(h). 
\end{align*}
We study $R_{h}(\widehat{P}_{n},P_{0})$:
\begin{align*}
    R_{h}(\widehat{P}_{n},P_{0})&= \int (y-\bar{y}_{n})(h(x)-\bar{h}_{n})dP_{0}- \int \E[Y|X=x](h(x)-\bar{h}_{0})dP_{0,X}(x)\\
    &= \int (yh(x) -y\bar{h}_{n} -\bar{y}_nh(x) +\bar{y}_n\bar{h}_{n} -yh(x) +\bar{y}_{0}\bar{h}_{0})dP_{0}(o) \\
    &=  \int ( -y\bar{h}_{n} -\bar{y}_nh(x) +\bar{y}_n\bar{h}_{n}+\bar{y}_{0}\bar{h}_{0})dP_{0}(o) \\
    &= (\bar{y}_{0}-\bar{y}_{n}) (\bar{h}_n-\bar{h}_{0}). 
\end{align*}

Now consider 
\begin{align*}
    \underset{h \in \mathcal{H}}{\sup}  |R_{h}(\widehat{P}_{n},P_{0})| & = \underset{h \in \mathcal{H}}{\sup} |(\bar{y}_{0}-\bar{y}_{n}) (\bar{h}_n-\bar{h}_{0})| \\
    & = |(\bar{y}_{0}-\bar{y}_{n})| \underset{h \in \mathcal{H}}{\sup} | (\bar{h}_n-\bar{h}_{0})| \\ 
    &=  |(\bar{y}_{0}-\bar{y}_{n})| \frac{1}{\sqrt{n}}\underset{h \in \mathcal{H}}{\sup} | (\sqrt{n}(\mathbb{P}_{n}-P_{0})h)| \\
    &= o_{P}(n^{-1/2})
\end{align*}
Since $\bar{y}_{0}-\bar{y}_n=o_{P}(n^{-1/2})$ and $ \frac{1}{\sqrt{n}}\underset{h \in \mathcal{H}}{\sup} | (\sqrt{n}(\mathbb{P}_{n}-P_{0})h)|=o_{P}(n^{-1/2})$.
\\
\noindent Next, we study $[P_{0}[D^*_{h,n}-D^*_{h,0}]^{2}]^{1/2}$. Since $(\mathbb{P}_n-P_{0})(\Omega_{\widehat{P}_{n}}(h)-\Omega_{P_{0}}(h))=0$, we only focus on studying
\begin{align}
P_{0}[((y -\bar{y}_{n})
(h(x)-\bar{h}_{n})-(y -\bar{y}_{0})
(h(x)-\bar{h}_{0}))]^{2}. \label{empiMean}
\end{align}
Consider,
\begin{align*}
   &  \int [(y-\bar{y}_n)(h(x)-\bar{h}_{n})  - (y-\bar{y}_0)(h(x)-\bar{h}_{0})]^2dP_{0}(o) \\
   &  = \int [ y-\bar{y}_n)(h(x)-\bar{h}_{n})  - (y-\bar{y}_0)(h(x)-\bar{h}_{0}) + (y-\bar{y}_0)(h(x)-\bar{h}_n)- (y-\bar{y}_0)(h(x)-\bar{h}_n)]^{2} P_{0}(o)\\
   & = \int [-(h(x)-\bar{h}_n)(\bar{y}_n-\bar{y}_0) - (y-\bar{y}_0)(\bar{h}_n-\bar{h}_0)]^2dP_{0}(o) \\
    &=   \int [-(h(x)-\bar{h}_{0}) (\bar{y}_n-\bar{y}_0) + (\bar{h}_{0}-\bar{h}_n)(\bar{y}_n-\bar{y}_0) - (y-\bar{y}_0)(\bar{h}_n-\bar{h}_0)]^2dP_{0}(o).
\end{align*}
Taking square root and applying triangular inequality, we are left to study each of the terms below:
\begin{align*}
    \|(h(x)-\bar{h}_{0})\|_{L^{2}(P_{0})} |\bar{y}_n-\bar{y}_0|,\quad  
     \|\bar{h}_{0}-\bar{h}_n\|_{L^{2}(P_{0})}|(\bar{y}_n-\bar{y}_0)|, \text{ and } 
      \|\bar{h}_n-\bar{h}_0\|_{L^{2}(P_{0})}  \|y-\bar{y}_0\|_{L^{2}(P_{0})}.
\end{align*}
If the following conditions,
\begin{align*}
    \sup_{h \in \mathcal{H}} |\bar{h}_n - \bar{h}_0| = o_p(1), & \quad  |\bar{y}_n - \bar{y}_0| = O_p(n^{-1/2}), \\
    \sup_{h \in \mathcal{H}} \|h - \bar{h}_0\|_{L^2(P_0)} < \infty, & \text{ and } \text{Var}_{P_0}(y) < \infty
\end{align*}
 hond, we have that  $\underset{h \in \mathcal{H}}{\text{sup }}|(\mathbb{P}_n-P_{0})(D^*_{h,n}-D^*_{h,0})|=o_{P}(n^{-1/2})$ and hence conclude  $\underset{h \in \mathcal{H}}{\text{sup }} |r_n(h)|=o_{P}(n^{-1/2}).$

\subsection{Example 2}
In this example, our function-valued parameter is $\Psi_{0,x_{s}}=\E[\mu_{0}(1,X)-\mu_{0}(0,X)|X_{s}=x_{s}]$. Denote $\bar{h}_{0}=\E[h(X)]$, $\mu_{0}(t,x)=\E[Y|T=t,X=x]$, $\pi_{0}(t|x)=\text{Pr}(T=t|X=x)$, $\psi_{0}(o)= \frac{2t-1}{\pi_{0}(t|x)}\{y-\mu_{0}(t,x)\} +\mu_{0}(1,x) -\mu_{0}(0,x)$, and $\tau_{0}=\int \E[\mu_{0}(1,X)  - \mu_{0}(0,X)|X_{s}=x_{s}]dP_{0,X_{s}}(x_{s})$  with their corresponding estimators  $\bar{h}_{n}= \frac{1}{n} \sum_{i=1}^n h(X_i)$, $\mu_{n}(t,x)$, $\pi_{n}(t|x)$, $\psi_{n}(o)$, and $\tau_{n}$, respectively. Recall that  
\begin{align*}
 \Omega_{P_{0}}(h)& =  \int \E[\mu_{0}(1,x)-\mu_{0}(0,x)|X_{s}] (h(x_{s})-\bar{h}_{0})dP_{0,X_{s}}( x_{s}) \\
     D^{*}_{h,0}(o)&=\Big (\frac{2t-1}{\pi_{0}(t|x)}\{y-\mu_{0}(t,x)\} +\mu_{0}(1,x) -\mu_{0}(0,x) +\tau_{0}  \Big) (h(x_{s})-\bar{h}_{0})  -  \Omega_{P_{0}}(h) \\
       D^{*}_{h,n}(o)&=\Big (\frac{2t-1}{\pi_{0}(t|x)}\{y-\mu_{n}(t,x)\} +\mu_{n}(1,x) -\mu_{n}(0,x) +\tau_{n} \Big) (h(x_{s})-\bar{h}_{n}) - \Omega_{\widehat{P}_{n}}(h).
\end{align*}
We study $R_{h}(\widehat{P}_{n},P_{0})$:
\begin{align*}
R_{h}(\widehat{P}_n,P_{0}) &= \int \big\{ \psi_n(o) - \tau_n \big\} \big\{ h(x_{s})-\bar{h}_{n} \big\} \, dP_0(o) 
- \int \big\{ \psi_{0}(o) - \tau_0 \big\} \big\{ h(x_{s})-\bar{h}_{0}\big\} \, dP_0(o) \\
&= \int \big[ \psi_n(o) - \psi_{0}(o) \big] h(x_s) \, dP_0(o) -(\tau_{n}-\tau_{0})\bar{h}_{0}  \\
& \quad - \left[\int \{\psi_{n}(o)-\tau_{n}\}dP_{0}(o)   \right]\bar{h}_{n} + \left[\int \big\{ \mu_0(1, x) - \mu_0(0, x) - \tau_0 \big\} dP_{0}(o)\right] \bar{h}_{0} \\
&= \int \big[ \psi_n(o) - \psi_{0}(o) \big] h(x_s) \, dP_0(o) -(\tau_{n}-\tau_{0})\bar{h}_{0}  \\
& \quad - \left[\int \{\psi_{n}(o)-\tau_{n}\}dP_{0}(o)   \right]\bar{h}_{0} - \left[\int  \{\psi_{n}(o)-\tau_{n}\} dP_{0}(o)\right](\bar{h}_{n}- \bar{h}_{0})\\
&= \int \big[ \psi_n(o) - \psi_{0}(o) \big] h(x_s) \,dP_0(o) \\
& \quad - \left[\int \{\psi_{n}(o)-\tau_{0}\}dP_{0}(o)   \right]\bar{h}_{0} - \left[\int  \{\psi_{n}(o)-\tau_{n}\} dP_{0}(o)\right](\bar{h}_{n}- \bar{h}_{0})\\
&= \int \big[ \psi_n(o) - \psi_{0}(o) \big] h(x_s) \,dP_0(o) \\
& \quad - \left[\int \{\psi_{n}(o)-\tau_{0}\}dP_{0}(o)   \right]\bar{h}_{0} - \left[\int  \{\psi_{n}(o)-\tau_{n}\} dP_{0}(o)\right](\bar{h}_{n}- \bar{h}_{0})\\
& \quad + \left[ \int  \{ \mu_0(1, x) - \mu_0(0, x) -\tau_{0}\} dP_{0}(o) \right](\bar{h}_{n}- \bar{h}_{0})\\
&= \int \big[ \psi_n(o) - \psi_{0}(o) \big] h(x_s) \,dP_0(o) \\
& \quad - \left(\int  \psi_n(o) - \psi_{0}(o) dP_0(o) \right)\bar{h}_{0} - \left[\int  \{\psi_{n}(o)-\tau_{n}\} dP_{0}(o)\right](\bar{h}_{n}- \bar{h}_{0})\\
& \quad + \left[ \int  \{ \psi_{0}(o) -\tau_{0}\} dP_{0}(o) \right](\bar{h}_{n}- \bar{h}_{0}) \\
&= \int \big[ \psi_n(o) - \psi_{0}(o) \big] (h(x_s)-\bar{h}_{0}) \,dP_0(o) \\
& \quad + \left(\int  \psi_n(o) - \psi_{0}(o) dP_0(o) \right)(\bar{h}_{n}-\bar{h}_{0} )\\
& \quad + (\tau_{n}-\tau_{0})(\bar{h}_{n}-\bar{h}_{0}).
\end{align*}
If the following conditions
\begin{align*}
         \int  (\mu_{0}(t,x)-\mu_{n}(t,x))^2dP_{0}(o)= o_{P}(1), & \quad  \int (\pi_{0}(t,x)-\pi_{n}(t,x))^2dP_{0}(o)  =  o_{P}(1), \\
        \left [\int (\pi_{0}(t,x)-\pi_{n}(t,x))^2 dP_{0}(o)\right]^{1/2} & \left [ \int  (\mu_{0}(t,x)-\mu_{n}(t,x))^2dP_{0}(o) \right ]^{1/2}  = o_{P}(n^{-1/2}),   \\
    \sup_{h \in \mathcal{H}} |\bar{h}_n - \bar{h}_0| = o_P(1),\text{ and }  & \sup_{h \in \mathcal{H}} \|h - \bar{h}_0\|_{L^2(P_0)} < \infty 
\end{align*}
hold, we have that $\underset{h \in \mathcal{H}}{\sup}|R_{h}(\widehat{P}_{n},P_{0})|=o_{P}(n^{-1/2})$. 
\\ \\
\noindent Studying $\underset{h \in \mathcal{H}}{\sup} \{P_{0}[D^*_{h,n}-D^*_{h,0}]^{2}\}^{1/2} $ reduces to showing that
\begin{align*}
    & \underset{h \in \mathcal{H}}{\sup} \Big\{\int \Big(\big\{ \psi_n(o) - \tau_n \big\} \big\{ h(x_{s})-\bar{h}_{n} \big\} - \big\{ \psi_{0}(o) - \tau_0 \big\} \big\{ h(x_{s})-\bar{h}_{0}\big\}\Big)^{2} dP_0(o)\Big\}^{1/2} =o_{P}(1).
\end{align*}
Following the rate conditions established earlier, we have $underset{h \in \mathcal{H}}{\sup} \{P_{0}[D^*_{h,n}-D^*_{h,0}]^{2}\}^{1/2}=o_{P}(1)$, which implies $ \underset{h \in \mathcal{H}}{\sup}|(\mathbb{P}_n-P_{0})(D^*_{h,n}-D^*_{h,0})|=o_{P}(n^{-1/2})$. Combining all the above, we conclude that $\underset{h \in \mathcal{H}}{\sup} |r_n(h)|=o_{P}(n^{-1/2})$. The details presented here have also been studied in \cite{dukes2024nonparametric}.

\subsection{Example 3}
Our function-valued parameter is $\Psi_{0,z}=\E[(Y-\E[Y|Z=z])(X-\E[X|Z=z])|Z=z]$. Let $\mu_{n,y|z}$ and $\mu_{n,x|z}$ be estimators of $\mu_{0,y|z}\coloneq \E[Y|Z=z]$ and $\mu_{0,x|z}\coloneq\E[X|Z=z]$ respectively. Recall that 
\begin{align*}
 \Omega_{P_{0}}(h)& =\int Cov(X,Y|Z=z)(h(z)-\bar{h}_{0})dP_{0,Z}(z) \\
       D^{*}_{h,0}(o)&=\Big ((y-\mu_{0,y|z})(x-\mu_{0,x|z})-\int Cov(X,Y|Z=z) dP_{0,z}\Big)h^{c}_{P_{0,Z}}(z) - \Omega_{P_{0}}(h) \\
        D^{*}_{h,n}(o)&=\Big ((y-\mu_{n,y|z})(x-\mu_{n,x|z})-\mathbb{P}_{n}[(y-\mu_{n,y|z})(x-\mu_{n,x|z})]\Big)h^{c}_{\mathbb{P}_{n,Z}}(z) - \Omega_{\widehat{P}_{n}}(h)
   \end{align*}

We study $R_{h}(\widehat{P}_{n},P_{0})$:
\begin{align*}
    R_{h}(\widehat{P}_{n},P_{0})
    & = \int (y-\mu_{n,y|z})(x-\mu_{n,x|z})(h(z)-\bar{h}_{n})dP_{0}(o) \\
    & - \int \mathbb{P}_{n}[(y-\mu_{n,y|z})(x-\mu_{n,x|z})](h(z)-\bar{h}_{n})dP_{0}(o) \\
    & - \int (y-\mu_{0,y|z})(x-\mu_{0,x|z})(h(z)-\bar{h}_{0})dP_{0}(o)\\
    &=  \int\big(  (y-\mu_{n,y|z})(x-\mu_{n,x|z})-(y-\mu_{0,y|z})(x-\mu_{0,x|z})\big)h(z)dP_{0}(o)\\
    & - \int\big(  (y-\mu_{n,y|z})(x-\mu_{n,x|z})-(y-\mu_{0,y|z})(x-\mu_{0,x|z})\big)\bar{h}_{0}dP_{0}(o)\\
    & - \int  (y-\mu_{n,y|z})(x-\mu_{n,x|z})(\bar{h}_{n}-\bar{h}_0)dP_{0}(o)\\
    & - \mathbb{P}_{n}[(y-\mu_{n,y|z})(x-\mu_{n,x|z})](\bar{h}_{0}-\bar{h}_{n})\\
    &= \int (\mu_{n,y|z}-\mu_{0,y|z})(\mu_{n,x|z}-\mu_{0,x|z})h(z)dP_{0,Z}(z) \\
    & - \int(\mu_{n,y|z}-\mu_{0,y|z})(\mu_{n,x|z}-\mu_{0,x|z})\bar{h}_{0}dP_{0,Z}(z) \\
    & +  (\bar{h}_{0}-\bar{h}_{n})\int(\mu_{n,y|z}-\mu_{0,y|z})(\mu_{n,x|z}-\mu_{0,x|z})dP_{0,Z}(z)\\
    & + (\bar{h}_{0}-\bar{h}_{n})\int (y-\mu_{0,y|z})(x-\mu_{0,x|z})dP_{0}(o) \\
    & - \mathbb{P}_{n}[(y-\mu_{n,y|z})(x-\mu_{n,x|z})](\bar{h}_{0}-\bar{h}_{n})\\
    &= \int (\mu_{n,y|z}-\mu_{0,y|z})(\mu_{n,x|z}-\mu_{0,x|z})(h(z)-\bar{h}_{0})dP_{0,Z}(z) \\
    & +  (\bar{h}_{0}-\bar{h}_{n})\int(\mu_{n,y|z}-\mu_{0,y|z})(\mu_{n,x|z}-\mu_{0,x|z})dP_{0,Z}(z) \\ 
    & + (\bar{h}_{0}-\bar{h}_{n})\Big ( \int (y-\mu_{0,y|z})(x-\mu_{0,x|z})dP_{0}(o) - \mathbb{P}_{n}[(y-\mu_{n,y|z})(x-\mu_{n,x|z})]  \Big ).
\end{align*}
Hence, if both $(\mu_{n,y|z}-\mu_{0,y|z})$ and $(\mu_{n,x|z}-\mu_{0,x|z})$ converge to zero in $L_{2}(P_{0})$ norm at the rate $o_{P}(n^{-1/4})$ in addition to $\sup_{h \in \mathcal{H}} |\bar{h}_n - \bar{h}_0| = o_P(1)$ and  $ \sup_{h \in \mathcal{H}} \|h - \bar{h}_0\|_{L^2(P_0)} < \infty$, we have $\underset{h in \mathcal{H}}{\sup}|R_{h}(\widehat{P}_{n},P_{0})=o_{P}(n^{-1/2})|$. In fact, the $L_{2}(P_{0})$ norm condition can generally be satisfied if the following hold 
        \begin{align*}
       \left [\int  (\mu_{0,y|z}-\mu_{n,y|z})^2dP_{0}\right]^{1/2} \left [ \int (\mu_{0,x|z}-\mu_{n,x|z})^2dP_{0} \right ]^{1/2} = o_{P}(n^{-1/2}) .        \end{align*}

\noindent Studying $\underset{h \in \mathcal{H}}{\sup}\{P_{0}[D^*_{h,n}-D^*_{h,0}]^{2}\}^{1/2}$ reduces to showing that
\begin{align*}
    &\int \Big \{(y-\mu_{n,y|z})(x-\mu_{n,x|z})
    -\Pp_{n}[(y-\mu_{n,y|z})(x-\mu_{n,x|z})](h(z)-\bar{h}_{n}) \\
    & -  ((y-\mu_{0,y|z})(x-\mu_{0,x|z}) -\int (x-\mu_{0,x|z})(y-\mu_{0,y|z}dP_{0})(h(z)-\bar{h}_{0}) \Big \}^{2}dP_{0}(o) \\
    &=o_{P}(1).
\end{align*}
Following the rate conditions established earlier, we have $\underset{h \in \mathcal{H}}{\sup} \{P_{0}[D^*_{h,n}-D^*_{h,0}]^{2}\}^{1/2}=o_{P}(1)$, which implies $ \underset{h \in \mathcal{H}}{\sup}|(\mathbb{P}_n-P_{0})(D^*_{h,n}-D^*_{h,0})|=o_{P}(n^{-1/2})$. Combining all the above, we conclude that $\underset{h \in \mathcal{H}}{\sup} |r_n(h)|=o_{P}(n^{-1/2})$.

\end{document}